\DeclareMathOperator*{\argmin}{argmin}
\begin{document}
\theoremstyle{plain}
\newtheorem{thm}{Theorem}
\newtheorem{remark}{Remark}
\newtheorem{lemma}{Lemma}
\newtheorem{prop}{Proposition}
\newtheorem*{cor}{Corollary}
\theoremstyle{definition}
\newtheorem{defn}{Definition}
\newtheorem{condi}{Condition}
\newtheorem{assump}{Assumption}

\title{Game Projection and Robustness for Game-Theoretic Autonomous Driving}
\author{Mushuang Liu\IEEEauthorrefmark{1},~\IEEEmembership{Member,~IEEE,}  ~H. Eric Tseng\IEEEauthorrefmark{3},  ~Dimitar Filev\IEEEauthorrefmark{3},~\IEEEmembership{Fellow,~IEEE,} ~ Anouck Girard\IEEEauthorrefmark{2},~\IEEEmembership{Senior Member,~IEEE} and Ilya Kolmanovsky\IEEEauthorrefmark{2},~\IEEEmembership{Fellow,~IEEE}

\thanks{\IEEEauthorrefmark{1} M. Liu is with the Department of Mechanical and Aerospace Engineering, University of Missouri, Columbia, MO, USA (email: ml529@missouri.edu).
}

\thanks{\IEEEauthorrefmark{3} H. E. Tseng, and D. Filev are with Ford Research and Innovation Center, 2101 Village Road, Dearborn, MI 48124, USA (e-mail: htseng@ford.com and dfilev@ford.com).}
\thanks{\IEEEauthorrefmark{2} A. Girard and I. Kolmanovsky are with the Department of  Aerospace Engineering, University of Michigan, Ann Arbor, MI, USA (email: anouck@umich.edu and ilya@umich.edu).
}
\thanks{This work is supported by Ford Motor Company.}
}
\maketitle
\markboth{Submitted to IEEE Transactions on Intelligent Transportation Systems} {Liu \MakeLowercase{\textit{et al.}}: }

\begin{abstract}
 Game-theoretic approaches are envisioned to bring human-like reasoning skills and decision-making processes for autonomous vehicles (AVs). However,  challenges including \textit{game complexity} and \textit{incomplete information} still remain to be addressed before they can be sufficiently practical for real-world use. \textit{Game complexity} refers to the difficulties of solving a multi-player game, which include solution existence, algorithm convergence, and scalability. To address these difficulties, a potential game based framework was developed in our recent work \cite{my_potential}. However, conditions on cost function design need to be enforced to make the game a potential game. This paper relaxes the conditions and makes the potential game approach applicable to more general scenarios, even including the ones that cannot be molded as a potential game. \textit{Incomplete information} refers to the ego vehicle's lack of knowledge of other traffic agents' cost functions. Cost function deviations between the ego vehicle estimated/learned other agents' cost functions and their actual ones are often inevitable. This motivates us to study the robustness of a game-theoretic solution. This paper defines the robustness margin of a game solution as the maximum magnitude of cost function deviations that can be accommodated in a game without changing the optimality of the game solution. With this definition, closed-form robustness margins are derived. Numerical studies using highway lane-changing scenarios are reported.   
\end{abstract}
\begin{IEEEkeywords}
Game theory, potential games, game decomposition, game projection, game robustness 
\end{IEEEkeywords}
\section{Introduction}
Fully autonomous vehicles (AVs) are envisioned to bring significant societal benefits, including improved road safety \cite{roadsafety}, increased fuel efficiency \cite{fuelefficiency}, reduced traffic congestion \cite{trafficcongestion}, enhanced mobility for the disabled and elderly \cite{inhancedmobility}, increased productivity \cite{productivity}, reduced need for parking space \cite{parking}, smoother traffic flow \cite{trafficflow}, and new business opportunities \cite{newbusiness}. However, major technical challenges still remain to be addressed before AVs can routinely drive on public roads, and one of them is the AV decision-making algorithm design \cite{challenge1_2022,challenge2_2022}. The AV decision-making algorithm serves as the AV's ``brain" and directly determines the AV behavior such as how the AV responds to the environment and/or to other road users, which may include AVs, human-driven vehicles, bicycles, and pedestrians. 

To enable interactive AVs with human-like reasoning skills and behavior, game-theoretic approaches have been explored in the literature \cite{game_1,game_merge,game_pursuer,game_racing,mine_1,Victor,online_payoff,game_leader}. In game theory, agents are assumed to pursue self-interests while considering the interactions with others. Such a reasoning process is often considered consistent with humans' (and human drivers') reasoning and decision making \cite{human_reasoning_1,pcpg_paper}. Along this line of research, both Nash games \cite{mine_1,Victor} and Stackelberg games (also called leader-follower games) \cite{online_payoff,game_leader} have been explored in the context of autonomous driving. However, these game-theoretic approaches often suffer from at least one of the two major challenges: \textit{Game complexity} and \textit{Incomplete information}. 

Game complexity refers to the difficulties of solving a multi-player game. These difficulties include, but are not limited to, (a) Solution existence, i.e., the desired solution, e.g., pure strategy Nash equilibrium (PSNE), may not always exist; (b) Algorithm convergence, i.e., the solution-seeking algorithm, e.g., best or better response dynamics algorithms, may not always converge; and (c) Scalability, i.e., the computational complexity of solving a multi-player game can increase exponentially with the number of players \cite{BR,game_leader}.  A few approaches/techniques have been developed to address some of these challenges. For example,  to address the scalability challenge, pairwise games have been widely used \cite{game_leader,online_payoff,Victor,mine_1}. However, such a pairwise setting may lead to conservative AV behavior and even deadlocks \cite{deadlock}, as it only captures local interactions between a pair of agents as opposed to global interaction among all agents. To overcome this limitation, as well as to address the aforementioned challenges (a)-(c), a receding horizon potential game based framework has been developed in our recent studies \cite{my_potential}. To construct a potential game, conditions on cost function design were established in \cite{my_potential}. However, although the proposed cost function design can accommodate many driving objectives, there are still traffic scenarios/situations where agents' behavior cannot be modeled by the proposed cost function \cite{game_leader,mine_1,suzhou}. Therefore, approaches to extending the applicability of the potential game approach to situations where the cost function may not satisfy the conditions in \cite{my_potential} or even the game is not potential are of significant interest. 

Incomplete information, which is another major challenge in game-theoretic autonomous driving, refers to the ego vehicle's lack of global knowledge of other traffic agents' objective functions or cost functions. Agents' cost functions should reflect their driving preferences/habits/styles, which are often encoded as driver-specific parameters in the cost functions \cite{online_payoff,closedloop1}. To know the values of these parameters, both offline calibration methods \cite{validation_1,validation_2,validation_3,validation_4,payoff1,closedloop1} and online learning and estimation methods have been developed \cite{online_payoff,sov}. However, irrespective of the approach, when facing specific agents on road, cost function deviations between the ego vehicle estimated/learned other agents' cost functions and their actual ones could always exist \cite{pcpg_paper}. Due to this inevitable cost function deviation, it is important that the obtained game-theoretic solution has a certain level of robustness, such that the solution of the assumed game (i.e., the game with the ego vehicle estimated/learned cost functions of others) can still perform well and possibly even remain optimal, in the actual game (i.e., the game of agents' actual cost functions). Therefore,  robustness analysis of game-theoretic autonomous driving policies needs to be conducted to find the robustness margins of a game-theoretic solution. 

Motivated by the above considerations, this paper focuses on the development of a more practical and robust game-theoretic framework for autonomous driving. Towards practicality, this paper broadens the applicability of the potential game framework for autonomous driving \cite{my_potential} to the case of non-potential games, by exploiting the game decomposition techniques \cite{game_decomposition,game_decomposition_2}. Specifically, given an arbitrary game, it is possible to project it onto the potential game space and approximate it using its ``closest" potential game. We provide such a game projection algorithm in this paper. Towards robustness, this paper defines and derives the robustness margin of a game-theoretic solution in the context of autonomous driving. The robustness margin is defined as the maximum magnitude of cost function deviations, which may be caused by the game projection and/or by the ego vehicle incomplete information, such that the obtained PSNE remains a PSNE in the original game, which corresponds to the game before projection and/or the game of agents with the actual cost functions. 

This paper is organized as follows. Section \ref{preliminary} introduces necessary definitions and preliminaries needed in the later sections. Section \ref{Sec_problemformulation} formulates the AV decision-making problem as a receding horizon multi-player game problem. Section \ref{sec_projection} introduces the game decomposition technique to project an arbitrary game onto the potential game space. Section \ref{sec_robustness} establishes robustness analysis. Section \ref{sec:simulation} includes a specific autonomous driving example as numerical studies, and Section \ref{sec_conclusion} concludes the paper.

\section{Definitions and Preliminaries}\label{preliminary}

Let $\mathbb{Z}_{+}(\mathbb{Z}_{++})$  denote the set of non-negative (positive) integers and $\mathbb{R}_{+}(\mathbb{R}_{++})$  denote the set of non-negative (positive) real numbers. 

To introduce game-theoretic background, let us first define \textit{strategic-form finite games}.
\begin{defn}[Strategic-Form Finite Game \cite{game_book}]
    A strategic-form finite game is defined as a tuple $\mathcal{G}=\{\mathcal{N},\mathcal{A},\{J_i\}_{i\in\mathcal{N}}\}$, where
    \begin{itemize}
        \item $\mathcal{N}=\{1,2,\cdots,N\}$ is a finite set of players or agents.
        \item $\mathcal{A}=\{\mathcal{A}_1,\mathcal{A}_2,\cdots,\mathcal{A}_N\}$ is a set of strategies with $\mathcal{A}_i$ representing the finite strategy space of player $i$. 
        \item $J_i: \mathcal{A}\to \mathbb{R}$ is the cost function of player $i$.
    \end{itemize}
\end{defn}

 Denote {by} $\mathcal{N}_{-i}$ the set of all agents except for agent $i$, i.e., $\mathcal{N}_{-i}=\{1,2,\cdots,i-1,i+1,\cdots,N\}$. We let $\mathbf{a}_i\in\mathcal{A}_i$ represent the strategy of agent $i$ and $\mathbf{a}_{-i}\in\mathcal{A}_{-i}$ represent the set of strategies of all other agents except for agent $i$, i.e., $\mathbf{a}_{-i}=\{\mathbf{a}_1,\dots,\mathbf{a}_{i-1},\mathbf{a}_{i+1},\dots,\mathbf{a}_N\}$ with $\mathcal{A}_{-i}$ being the  domain of $\mathbf{a}_{-i}$. Denote $\mathbf{a}=\{\mathbf{a}_i,\mathbf{a}_{-i}\}\in\mathcal{A}$, which shall be referred to as a \textit{strategy profile}. Denote by $h_i=|\mathcal{A}_i|$ the cardinality of the strategy space of player $i$, and $|\mathcal{A}|=\prod_{i=1}^Nh_i$.
Let $J_{-i}=\{J_1,\dots,J_{i-1},J_{i+1},\dots,J_N\}$ and $J=\{J_i,J_{-i}\}$. 
 
 We denote by $C_0=\{f|f:\mathcal{A}\rightarrow \mathbb{R}\}$ the set of real-valued functions. The  inner product in $C_0$ is defined as \cite{game_decomposition}.
 \begin{equation}\label{inner_1}
     \langle f_1,f_2\rangle=\sum_{\mathbf{a}\in\mathcal{A}}f_1(\mathbf{a})f_2(\mathbf{a}).
 \end{equation}
 
 Given  $\mathcal{N}$ and $\mathcal{A}$, a game is uniquely defined by its set of cost functions $\{J_i\}_{i\in\mathcal{N}}\in C_0^N$. Hence, the \textit{space of games} with players $\mathcal{N}$ and strategy profiles $\mathcal{A}$, which is denoted by $\mathscr{G}_{\mathcal{N},\mathcal{A}}$, can be identified as $\mathscr{G}_{\mathcal{N},\mathcal{A}}\cong C_0^N$. 
Define the inner product on the space of games $\mathscr{G}_{\mathcal{N},\mathcal{A}}$ \cite{game_decomposition} as  \begin{equation}\label{game_inner}
\langle\mathcal{G},\hat{\mathcal{G}}\rangle_{\mathcal{N},\mathcal{A}}\triangleq \sum_{i\in \mathcal{N}}h_i\langle J_i,\hat{J}_i\rangle,
 \end{equation}
 where $\mathcal{G}=\{\mathcal{N},\mathcal{A},\{J_i\}_{i\in\mathcal{N}}\}$ and $\hat{\mathcal{G}}=\{\mathcal{N},\mathcal{A},\{\hat{J}_i\}_{i\in\mathcal{N}}\}$.
 
 The inner product \eqref{game_inner} induces a norm to quantify the distance between games. Specifically, the norm on $\mathscr{G}_{\mathcal{N},\mathcal{A}}$ \cite{game_decomposition} is defined as 
 \begin{equation}\label{game_norm}
     \|\mathcal{G}\|^2_{\mathcal{N},\mathcal{A}}=\langle\mathcal{G},\mathcal{G}\rangle_{\mathcal{N}, \mathcal{A}}.
 \end{equation}

Several basic solution concepts for a non-cooperative game, including the \textit{best response}, \textit{Nash equilibrium}, and \textit{$\epsilon$-Nash equilibrium} are introduced in Definitions \ref{d1}-\ref{d_epsilon}.
\begin{defn} [Best Response \cite{game_book}] \label{d1} 
Agent $i$'s best response to other agents' fixed strategies $\mathbf{a}_{-i}\in\mathcal{A}_{-i}$ is defined as the strategy $\mathbf{a}_i^*$ such that
\begin{equation}
    J_i( \mathbf{a}_i^*, \mathbf{a}_{-i})\leq J_i(\mathbf{a}_i, \mathbf{a}_{-i}) \quad \forall \mathbf{a}_i\in\mathcal{A}_i.
\end{equation}
\end{defn}

\begin{defn} [Pure-Strategy Nash Equilibrium \cite{game_book}] \label{d2}
An $N$-tuple of strategies (or strategy profile) $\{\mathbf{a}_1^*,\mathbf{a}_2^*,...,\mathbf{a}_N^*\}$  is  a pure-strategy Nash equilibrium for an $N$-player game if and only if   
\begin{equation}\label{Nash_de}
 J_i(\mathbf{a}_i^*,\mathbf{a}_{-i}^*)\leq J_i(\mathbf{a}_i,\mathbf{a}_{-i}^*)\quad  \forall \mathbf{a}_i\in\mathcal{A}_i, \forall i\in\mathcal{N}.
\end{equation}
\end{defn}
Equation \eqref{Nash_de} implies that  if all agents play their best response, then a PSNE is achieved, and if a PSNE is achieved, then no player would have the incentive to change its strategy. Throughout this paper, we only consider pure strategies, and the term ``pure strategy" may be omitted.

Related to NE, we define \textit{$\epsilon$-NE} as follows. 
\begin{defn} [$\epsilon$ Nash Equilibrium \cite{game_decomposition}] \label{d_epsilon}
Given $\epsilon\in \mathbb{R}_{+}$, a strategy profile $\{\mathbf{a}_1,\mathbf{a}_2,...,\mathbf{a}_N\}$  is  a pure-strategy $\epsilon$-NE for an $N$-player game if and only if   
\begin{equation}\label{epsilonNash_de}
 J_i(\mathbf{a}_i,\mathbf{a}_{-i})\leq J_i(\mathbf{a}'_i,\mathbf{a}_{-i})+\epsilon\quad  \forall \mathbf{a}'_i\in\mathcal{A}_i, \forall i\in\mathcal{N}.
\end{equation}
\end{defn}

As shown by \eqref{Nash_de} and \eqref{epsilonNash_de}, both NE and $\epsilon$-NE are defined in terms of cost differences at \textit{comparable strategy profiles}, which are defined as a pair of strategy profiles that only differ in a single agent's strategy \cite{game_decomposition}. We denote the set of pairs of comparable strategy profiles by $\mathcal{C}\subset\mathcal{A}\times\mathcal{A}$, i.e., the strategy profiles $\mathbf{a}\in\mathcal{A}$ and $\mathbf{b}\in\mathcal{A}$ are comparable if $(\mathbf{a},\mathbf{b})\in\mathcal{C}$. A pair of comparable strategy profiles that differ in agent $i$'s strategy is referred to as \textit{$i$-comparable strategy profiles}. We denote the set of pairs of $i$-comparable strategy profiles by $\mathcal{C}_i\subset\mathcal{A}\times\mathcal{A}$, i.e., the strategy profiles $\mathbf{a}\in\mathcal{A}$ and $\mathbf{b}\in\mathcal{A}$ are $i$-comparable if $(\mathbf{a},\mathbf{b})\in\mathcal{C}_i$. With this, we define a \textit{pairwise comparison function} $X:\mathcal{A}\times\mathcal{A}\rightarrow \mathbb{R}$ as 
\begin{align}\label{X}
&X(\mathbf{a},\mathbf{b})\nonumber\\
&=\begin{cases}J_i(\mathbf{a})-J_i(\mathbf{b}), &\mbox{if $(\mathbf{a},\mathbf{b})\in\mathcal{C}_i$ for some $i\in\mathcal{N}$},\vspace{1ex}\\
0, &\mbox{otherwise}.\vspace{1ex}\end{cases}
\end{align}

Note that $X(\mathbf{a},\mathbf{b})$ satisfies 
\begin{align}\label{X_property}
&X(\mathbf{a},\mathbf{b})=\begin{cases} -X(\mathbf{b},\mathbf{a}), &\mbox{if $(\mathbf{a},\mathbf{b})\in\mathcal{C}$},\vspace{1ex}\\
0, &\mbox{otherwise}.\vspace{1ex}\end{cases}
\end{align}

Such a pairwise comparison function results in a \textit{flow representation} of a game \cite{game_decomposition}, which emphasizes the change of costs between comparable strategy profiles. 

We denote by $C_1=\{Y|Y:\mathcal{A}\times\mathcal{A}\rightarrow \mathbb{R}\}$ the set of functions that satisfy \eqref{X_property}. In addition, we define an indicator function $W:\mathcal{A}\times\mathcal{A}\rightarrow \mathbb{R}$ as
\begin{align}\label{W}
&W(\mathbf{a},\mathbf{b})=\begin{cases} 1, &\mbox{if $(\mathbf{a},\mathbf{b})\in\mathcal{C}$},\vspace{1ex}\\
0, &\mbox{otherwise}.\vspace{1ex}\end{cases}
\end{align}

With the above notation, we define the \textit{combinatorial gradient operator} $\delta_0:C_0\rightarrow C_1$, given by 
\begin{equation}\label{delta_def}
    (\delta_0f)(\mathbf{a},\mathbf{b})=W(\mathbf{a},\mathbf{b})(f(\mathbf{a})-f(\mathbf{b})),
\end{equation}
where $\mathbf{a},\mathbf{b}\in \mathcal{A}$ and  $f\in C_0$ \cite{game_decomposition}.

Now we introduce player-specific operators. Let $W_i:\mathcal{A}\times\mathcal{A}\rightarrow \mathbb{R}$ be the indicator function for $i$-comparable strategy profiles given by
\begin{align}\label{Wi}
&W_i(\mathbf{a},\mathbf{b})=\begin{cases} 1, &\mbox{if $(\mathbf{a},\mathbf{b})\in\mathcal{C}_i$},\vspace{1ex}\\
0, &\mbox{otherwise}.\vspace{1ex}\end{cases}
\end{align}
It relates to $W$ in \eqref{W} by
\begin{equation}
    W=\sum_{i\in\mathcal{N}}W_i.
\end{equation}

Define the difference operator $D_i:C_0\rightarrow C_1$ such that
\begin{equation}\label{operator_difference}
    (D_if)(\mathbf{a},\mathbf{b})=W_i(\mathbf{a},\mathbf{b})(f(\mathbf{a})-f(\mathbf{b})),
\end{equation}
where $\mathbf{a},\mathbf{b}\in \mathcal{A}$ and  $f\in C_0$ \cite{game_decomposition}. This operator quantifies the change in $f$ between a pair of $i$-comparable strategy profiles. With this, the pairwise comparison function \eqref{X} can be represented as 
\begin{equation}\label{D_def}
    X=\sum_{i\in\mathcal{N}}D_iJ_i.
\end{equation}

Relevant to $D_i$, we define $D:C_0^N\rightarrow C_1$ such that $D=[D_1,D_2,..,D_N]$.

Next we consider three classes of games, called \textit{finite potential games}, \textit{finite harmonic games}, and \textit{finite nonstrategic games}. Throughout this paper, we only consider finite games, i.e., the games with a finite number of players and strategies, and the term ``finite" may be omitted. 

\begin{defn} [Finite Potential Game \cite{potential_book}] \label{d3}
The game $\mathcal{G}=\{\mathcal{N},\mathcal{A},\{J_i\}_{i\in\mathcal{N}}\}$ is a finite potential game if and only if $\mathcal{N}$ and $\mathcal{A}$ contain a finite number of elements  and a potential function {$F:\mathcal{A}\rightarrow \mathbb{R}$} exists such that, $\forall i\in\mathcal{N}$,
\begin{equation}\label{Potential_def}
\begin{split}
& J_i(\mathbf{a}_i,\mathbf{a}_{-i})- J_i(\mathbf{a}'_i,\mathbf{a}_{-i})= F(\mathbf{a}_i,\mathbf{a}_{-i})- F(\mathbf{a}'_i,\mathbf{a}_{-i}),\\
&\quad\quad\quad \forall \mathbf{a}_i,\mathbf{a}'_i\in \mathcal{A}_i, \text{and }   \forall \mathbf{a}_{-i}\in \mathcal{A}_{-i}.
\end{split}
\end{equation}
\end{defn}

With the difference operator $D_i$ defined in \eqref{operator_difference}, the potential game can be alternatively defined as the game where a function $F:\mathcal{A}\rightarrow \mathbb{R}$ exists such that 
\begin{equation}\label{potential_def_alter}
    D_iJ_i=D_iF
\end{equation} 
holds for all $i\in\mathcal{N}$ \cite{potential_projection}. 
A potential game has many appealing properties. One of them is that a potential game always has at least one PSNE \cite{potential_book, my_potential}. 

 \begin{defn} [Finite Harmonic Game \cite{game_decomposition}] \label{d_h}
The game $\mathcal{G}=\{\mathcal{N},\mathcal{A},\{J_i\}_{i\in\mathcal{N}}\}$ is a finite harmonic game if and only if $\mathcal{N}$ and $\mathcal{A}$  contain a finite number of elements and
\begin{equation}\label{harmonic}
\begin{split}
& \sum_{i\in\mathcal{N}}\sum_{\mathbf{a}'_i\in\mathcal{A}_i}\left(J_i(\mathbf{a}_i,\mathbf{a}_{-i})- J_i(\mathbf{a}'_i,\mathbf{a}_{-i})\right)=0,\\
&\quad\quad\quad \forall \mathbf{a}_i,\mathbf{a}'_i\in \mathcal{A}_i, \text{and }   \forall \mathbf{a}_{-i}\in \mathcal{A}_{-i}.
\end{split}
\end{equation}
\end{defn}

Eq. \eqref{harmonic} indicates that harmonic games are games that satisfy flow conservation condition: For any strategy profile $(\mathbf{a}_i,\mathbf{a}_{-i})$, its total ``flow", i.e., the sum of all pairwise comparisons   $\sum_{i\in\mathcal{N}}\sum_{\mathbf{a}'_i\in\mathcal{A}_i}\left(J_i(\mathbf{a}_i,\mathbf{a}_{-i})- J_i(\mathbf{a}'_i,\mathbf{a}_{-i})\right)$, equals zero.   
In contrast to potential games, a harmonic game does not have a PSNE, unless it is a nonstrategic game as defined in Definition \ref{d_N}.

 \begin{defn} [Finite Nonstrategic Game \cite{nonstrategic}] \label{d_N}
The game $\mathcal{G}=\{\mathcal{N},\mathcal{A},\{J_i\}_{i\in\mathcal{N}}\}$ is a finite nonstrategic game if and only if $\mathcal{N}$ and $\mathcal{A}$  contain a finite number of elements and $\forall i\in\mathcal{N}$,
\begin{equation}\label{nonstrategic}
\begin{split}
& J_i(\mathbf{a}_i,\mathbf{a}_{-i})= J_i(\mathbf{a}'_i,\mathbf{a}_{-i}),\\
&\forall \mathbf{a}_i,\mathbf{a}'_i\in \mathcal{A}_i, \text{and }   \forall \mathbf{a}_{-i}\in \mathcal{A}_{-i}.
\end{split}
\end{equation}
\end{defn}

Eq. \eqref{nonstrategic} shows that for a nonstrategic game, every strategy profile is a PSNE. In addition, given two games $\mathcal{G}=\{\mathcal{N},\mathcal{A},\{J_i\}_{i\in\mathcal{N}}\}$ and $\hat{\mathcal{G}}=\{\mathcal{N},\mathcal{A},\{\hat{J}_i\}_{i\in\mathcal{N}}\}$, they are strategically equivalent if the game $\{\mathcal{N},\mathcal{A},\{J_i-\hat{J}_i\}_{i\in\mathcal{N}}\}$ is a nonstrategic game \cite{game_decomposition_2}. Strategically equivalent games yield the same set of PSNE.

To fix a representative for strategically equivalent games \cite{game_decomposition},  we introduce a \textit{normalized game} as follows.
 \begin{defn} [Finite Normalized Game \cite{game_decomposition}] \label{d_Normal}
A game $\mathcal{G}=\{\mathcal{N},\mathcal{A},\{J_i\}_{i\in\mathcal{N}}\}$ is called normalized if 
\begin{equation}\label{normalized}
\begin{split}
& \sum_{\mathbf{a}_i\in \mathcal{A}_i}J_i(\mathbf{a}_i,\mathbf{a}_{-i})=0,\\
&\forall \mathbf{a}_{-i}\in \mathcal{A}_{-i}, \text{and }  \forall i\in \mathcal{N}.
\end{split}
\end{equation}
\end{defn}

As shown in \cite{game_decomposition}, for any given game, there exists a unique strategically equivalent normalized game, i.e., these two games have identical pairwise comparison functions.

\section{Problem Formulation}\label{Sec_problemformulation}
Consider a traffic scenario, where multiple traffic agents  $\mathcal{N}=\{1,2,\cdots,N\}$ share the road. Each agent aims to optimize its own driving performance, which is jointly affected by other road users so that the optimal strategies are determined according to   
\begin{equation}\label{optimization}
    \begin{split}
    \mathbf{a}^*_i(t)&\in \argmin_{\mathbf{a}_i(t)\in\mathcal{A}_i}J_i(\mathbf{a}_i(t),\mathbf{a}_{-i}(t))
    \\
    &=\argmin_{\mathbf{a}_i(t)\in\mathcal{A}_i} \sum_{\tau=t}^{t+T-1}\Psi_i(x_i(\tau),x_{-i}(\tau),a_i(\tau),a_{-i}(\tau)),
    \end{split}
\end{equation}
where  $t\in\mathbb{Z}_{+}$ is the time index, $\mathbf{a}_i(t)=\{a_i(t),a_i(t+1),\dots,a_i(t+T-1)\}\in\mathcal{A}_i$ is agent $i$'s strategy, i.e., agent $i$'s action sequence over the horizon $[t,t+1,...,t+T-1]$, $T\in\mathbb{Z}_{++}$ is the prediction horizon length, $a_i(t)\in\mathcal{U}_i$ is agent $i$'s action at $t$, which can represent, for instance,  a traffic agent's accelerations and/or steering angle, $x_i(t)\in\mathcal{X}_i$ is agent $i$'s state, which typically includes the position and velocity information, $\Psi_i:\mathcal{A}\times\mathcal{X}\rightarrow\mathbb{R}$, where $\mathcal{X}=\{\mathcal{X}_1,\mathcal{X}_2,...,\mathcal{X}_N\}$, is agent $i$'s instantaneous cost at one time instant, and $J_i:\mathcal{A}\rightarrow\mathbb{R}$ is agent $i$'s cumulative cost over the $T$ steps horizon. It is important to note that the expression of $J_i(\mathbf{a}_i(t),\mathbf{a}_{-i}(t))$ can be different at different $t$ due to its state dependence, i.e., dependence on states $x_i(t)$ and $x_{-i}(t)$. With a slight abuse of notation, we do not distinguish $J_i$ at different $t$, as this state dependence is clear from the context. After deriving $\mathbf{a}^*_i(t)$, agent $i$ implements the first element
 $a^*_i(t)$ and repeats the same procedure at the next time instant, $t+1$, with a shifted horizon.
 
 We use  a general nonlinear function $g_i$ to represent agent $i$'s state evolution model:
 \begin{equation}
     x_i(t+1)=g_i(x_i(t),a_i(t)).
 \end{equation}
Examples of $g_i$ include a single-mass model \cite{add_dynamics_RL,add_dynamics_nan}, a unicycle model \cite{add_dynamics_RTD}, and a bicycle model \cite{bicycle}. We refer the readers to our recent work \cite{pcpg_paper} for more detailed discussions on the physical interpretations of $x_i$, $a_i$, $g_i$, and $J_i$, possible models of $g_i$, and $J_i$, and the rationality of the receding horizon game \eqref{optimization} in the autonomous driving decision-making setting. 

 As described in \cite{pcpg_paper}, such a receding horizon game enables predictive capability for the AV, allows fast response to any changes in the environment, and tackles agents' interactions from a self-interest optimization perspective, which is consistent with human driving experience. However, solving such a multi-player game \eqref{optimization} is not straightforward, and the challenges include 
\begin{enumerate}
    \item \textbf{\textit{Existence of solution:}} The strategy profile $\{\mathbf{a}_1^*(t),\mathbf{a}_2^*(t),...,\mathbf{a}_N^*(t)\}$ that satisfies   \eqref{optimization} for all $i\in\mathcal{N}$ may not always exist.
    \item \textbf{\textit{Convergence of algorithm:}} A solution seeking algorithm, e.g., best- or better- response dynamics algorithm, may not always converge.
    \item \textbf{\textit{Nonuniqueness of PSNE:}} The game \eqref{optimization} may yield multiple solutions, i.e., multiple PSNE.  
    {Selecting the one that is preferable to others is often not straightforward.}  
    \item \textbf{\textit{Computational scalability:}} Solving  \eqref{optimization} is often computationally expensive when $N$ is large. Specifically, if using best response dynamics (i.e., Algorithm \ref{A_br}), then the number of required optimizations  can increase exponentially with the number of agents \cite{BR}. 
    \item \textbf{\textit{Lack of information:}} To solve \eqref{optimization}, the ego vehicle needs to know everyone's cost function, which is not realistic in a traffic setting, considering the variability of human driving styles.  
\end{enumerate}
\begin{algorithm}[t]
\caption{Best response dynamics to solve \eqref{optimization}} \label{A_br}
\hspace*{0.0in} {\bf Inputs:} \\ 
\hspace*{0.2in}Agent set  $\mathcal{N}$;  \\
\hspace*{0.2in}System state  $x(t)$;  \\
\hspace*{0.2in}System dynamics  $g_i$, $i\in\mathcal{N}$;\\
\hspace*{0.2in}Strategy space  $\mathcal{A}$;  \\
\hspace*{0.2in}Cost functions  $J_i$, $i\in\mathcal{N}$;\\
\hspace*{0.02in} {\bf Output:} \\
\hspace*{0.2in} PSNE $\mathbf{a}^*(t)$.\\
\hspace*{0.02in} {\bf Procedures:} 
\begin{algorithmic}[1]
\STATE Set \textit{NashCondition=False}
\STATE {\bf While} \textit{NashCondition=False} {\bf do}
\STATE \hspace*{0.2in}{\bf For} $i=1,2,...,N$ {\bf do}
\STATE \hspace*{0.3in} Find $\mathbf{a}_i^*(t)$ according to 
\hspace*{0.3in} 
\begin{equation}\nonumber
    \quad\quad\quad\mathbf{a}_i^*(t)\in\argmin_{\mathbf{a}_i(t)\in\mathcal{A}_i} J_i(\mathbf{a}_i(t),\mathbf{a}_{-i}(t)).
\end{equation}
\STATE \hspace*{0.3in} Update $\mathbf{a}_i(t)$ using $\mathbf{a}^*_i(t)$.
\STATE \hspace*{0.2in} {\bf End for}
\STATE \hspace*{0.2in}{\bf If} \\
\STATE  \hspace*{0.3in} $
    \mathbf{a}^*_i(t)\in\argmin\limits_{\mathbf{a}_i(t)\in\mathcal{A}_i} J_i(\mathbf{a}_i(t),\mathbf{a}_{-i}^*(t))$ holds    $\forall i\in\mathcal{N}$,
\STATE  \hspace*{0.16in} {\bf Then} \\
\STATE \hspace*{0.3in} Set \textit{NashCondition=True}.
\STATE \hspace*{0.2in}{\bf End if}
\STATE {\bf End while}
\end{algorithmic}
\end{algorithm}

To solve the challenges $1-4$, a receding horizon potential game framework was developed in our recent studies \cite{my_potential}. Specifically, as detailed in \cite{my_potential}, if the game \eqref{optimization} is a potential game, then the following  properties hold:
\begin{itemize}
\item The game \eqref{optimization} always has at least one PSNE.
\item Algorithm \ref{A_br} always converges.
\item One can find a PSNE that is both individually optimal (in the sense of self-interest optimization, i.e., NE) and socially optimal (in the sense of social interest optimization, where the ``social interest" is characterized by a potential function). Such a PSNE is often preferable to other solutions because it represents a considerate AV which cares about not only itself but also other traffic agents.
\item The computational challenge is addressed by the potential function optimization algorithm (Algorithm \ref{A2}), since this algorithm only requires \textit{one} optimization procedure regardless of the number of agents $N$.
\end{itemize}  

Despite the aforementioned merits, it is often not straightforward to make the game  \eqref{optimization} a potential game.  We show in \cite{my_potential} that if agents' cost functions are designed so that they satisfy certain conditions, then the resulting game is a potential game. We  include this result in Theorem \ref{t1}. 

\begin{thm}[Theorem 6 in \cite{my_potential}]\label{t1}
If the cost function $J_i$ in \eqref{optimization} satisfies
\begin{equation}\label{cost_form}
\begin{split}
    &J_i(\mathbf{a}_i(t),\mathbf{a}_{-i}(t))\\
    &=\alpha J_i^{self}(\mathbf{a}_i(t))+\beta \sum_{j\in\mathcal{N},j\neq i} J_{ij}(\mathbf{a}_i(t),\mathbf{a}_{j}(t)),
\end{split}
\end{equation}
where  $J_i^{self}: \mathcal{A}_i\to \mathbb{R}$ is a function {determined} solely by agent $i$'s strategy, $J_{ij}: \mathcal{A}_i\times\mathcal{A}_j\to \mathbb{R}$ satisfies
\begin{equation}\label{symmetric}
\begin{split}
  &\quad  J_{ij}(\mathbf{a}_i(t),\mathbf{a}_{j}(t))=J_{ji}(\mathbf{a}_j(t),\mathbf{a}_{i}(t)), \\
  &\forall i,j\in\mathcal{N}, i\neq j, \text{ and } \forall \mathbf{a}_i\in \mathcal{A}_i, \mathbf{a}_{j}\in \mathcal{A}_j,
    \end{split}
\end{equation}
and $\alpha$ and $\beta$ are two {real numbers}. Then the game \eqref{optimization} is a potential game with the following potential function, 
\begin{equation}\label{potential_design}
\begin{split}
        & F(\mathbf{a}(t))\\
    & =\alpha\sum_{i\in\mathcal{N}}J_i^{self}(\mathbf{a}_i(t))+\beta\sum_{i\in \mathcal{N}}\sum_{j\in\mathcal{N},j< i} J_{ij}(\mathbf{a}_i(t),\mathbf{a}_{j}(t)).
\end{split}
\end{equation}
\end{thm}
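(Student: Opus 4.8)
The plan is to verify the defining identity \eqref{Potential_def} of a potential game directly, taking the proposed function \eqref{potential_design} as the candidate potential $F$. Fixing an arbitrary agent $i\in\mathcal{N}$, two of its strategies $\mathbf{a}_i,\mathbf{a}'_i\in\mathcal{A}_i$, and an arbitrary fixed profile $\mathbf{a}_{-i}\in\mathcal{A}_{-i}$ of the remaining agents, I would compute the left-hand side $J_i(\mathbf{a}_i,\mathbf{a}_{-i})-J_i(\mathbf{a}'_i,\mathbf{a}_{-i})$ and the right-hand side $F(\mathbf{a}_i,\mathbf{a}_{-i})-F(\mathbf{a}'_i,\mathbf{a}_{-i})$ separately, and show the two coincide.

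First I would expand the left-hand side using the structural form \eqref{cost_form}. Since every term of $J_i$ already involves $\mathbf{a}_i$, the difference is immediate: the self term contributes $\alpha\big(J_i^{self}(\mathbf{a}_i)-J_i^{self}(\mathbf{a}'_i)\big)$ and the pairwise terms contribute $\beta\sum_{j\neq i}\big(J_{ij}(\mathbf{a}_i,\mathbf{a}_j)-J_{ij}(\mathbf{a}'_i,\mathbf{a}_j)\big)$.

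Next I would expand the right-hand side using \eqref{potential_design}. In the self sum $\alpha\sum_{k\in\mathcal{N}}J_k^{self}(\mathbf{a}_k)$ every summand with $k\neq i$ is independent of $\mathbf{a}_i$ and so cancels in the difference, leaving exactly $\alpha\big(J_i^{self}(\mathbf{a}_i)-J_i^{self}(\mathbf{a}'_i)\big)$, which already matches the self contribution on the left. The crux is the pairwise double sum $\beta\sum_{k\in\mathcal{N}}\sum_{l<k}J_{kl}(\mathbf{a}_k,\mathbf{a}_l)$: each summand whose index pair omits $i$ is independent of $\mathbf{a}_i$ and cancels, while $i$ survives in two distinct roles --- as the larger index, giving the terms $J_{il}(\mathbf{a}_i,\mathbf{a}_l)$ with $l<i$, and as the smaller index, giving the terms $J_{ki}(\mathbf{a}_k,\mathbf{a}_i)$ with $k>i$.

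The main obstacle --- and the only place where the hypotheses do nontrivial work --- is to reconcile these two partial sums over $\{l<i\}$ and $\{k>i\}$ with the single full sum $\sum_{j\neq i}$ appearing on the left. Here I would invoke the symmetry condition \eqref{symmetric}, $J_{ki}(\mathbf{a}_k,\mathbf{a}_i)=J_{ik}(\mathbf{a}_i,\mathbf{a}_k)$, to rewrite each term in which $i$ occupies the second argument as one in which it occupies the first. The two partial sums then merge into $\beta\sum_{j\neq i}\big(J_{ij}(\mathbf{a}_i,\mathbf{a}_j)-J_{ij}(\mathbf{a}'_i,\mathbf{a}_j)\big)$, matching the pairwise contribution on the left. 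With both contributions matched, \eqref{Potential_def} holds; since $i$, $\mathbf{a}_i$, $\mathbf{a}'_i$, and $\mathbf{a}_{-i}$ were arbitrary, $\mathcal{G}$ is a potential game with potential $F$.
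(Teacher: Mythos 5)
Your proposal is correct and is exactly the standard direct verification of the potential-game identity \eqref{Potential_def}, which is the argument behind the cited result (the paper itself imports Theorem \ref{t1} from \cite{my_potential} without reproducing the proof). The key step --- splitting the double sum $\sum_{k}\sum_{l<k}J_{kl}$ into the terms where $i$ is the larger versus the smaller index and using the symmetry condition \eqref{symmetric} to merge them into $\sum_{j\neq i}J_{ij}(\mathbf{a}_i,\mathbf{a}_j)$ --- is handled properly, so nothing is missing.
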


Theorem \ref{t1} states that if $J_i$ in \eqref{optimization} satisfies \eqref{cost_form}, then the game \eqref{optimization} is a  potential game. For the cost \eqref{cost_form}, the first term $J_i^{self}$ can be used to model self-centered objectives, including tracking a desired speed, optimizing ride comfort, maximizing fuel efficiency, and so on. The second term $J_{ij}$ can be used to model symmetric pairwise interactions, such as a pairwise collision penalty \cite{my_potential}.  

Although such a cost function design satisfying \eqref{cost_form}-\eqref{symmetric} can accommodate many driving objectives, there are still traffic scenarios where agents' interactions are asymmetric. For instance, this asymmetry may be caused by right-of-way/road priority rules or by significant vehicle size differences such as the interaction between a sedan and a huge truck in highway merging. The modeling of these asymmetric interactions could then result in $J_{ij}$ not satisfying \eqref{symmetric}, and thus invalidate the conditions of Theorem \ref{t1}, thereby preventing the  potential game approach, which has many appealing properties as detailed above, from being applicable to these scenarios/situations. 

In addition, as described in Challenge $5$, the ego vehicle often does not have complete information on other traffic agents' cost functions, but can only rely on assumed costs obtained either from offline calibration \cite{payoff1,closedloop1} or  online estimation/learning \cite{online_payoff}. Therefore, it is very likely that cost function deviations exist between the game solved by the ego vehicle and the one of agents' actual cost functions. Quantification of whether and how such cost function deviations affect the game solution is desired. 

Motivated by the above considerations, in this paper, we provide a method to  broaden the applicability of the potential game based autonomous driving \cite{my_potential} to accommodate more general traffic scenarios. In addition, we quantify the robustness margins of solutions to the game  \eqref{optimization} against  cost function deviations. 
\begin{algorithm}[t]
\caption{Potential function optimization to solve \eqref{optimization}} \label{A2}
\hspace*{0.0in} {\bf Input:} \\ 
\hspace*{0.2in}Agent set  $\mathcal{N}$;  \\
\hspace*{0.2in}System state  $x(t)$;  \\
\hspace*{0.2in}System dynamics  $g_i$, $i\in\mathcal{N}$;\\
\hspace*{0.2in}Strategy space  $\mathcal{A}$;  \\
\hspace*{0.2in}Cost functions  $J_i$, $i\in\mathcal{N}$, that satisfy \eqref{cost_form};\\
\hspace*{0.02in} {\bf Output:} \\
\hspace*{0.2in} PSNE $\mathbf{a}^*(t)$.\\
\hspace*{0.02in} {\bf Procedures:} 
\begin{algorithmic}[1]
\STATE Find  $F(\mathbf{a}(t)) $ according to \eqref{potential_design}. 
\STATE Find $\mathbf{a}^*(t)$ such that
\begin{equation}\label{potential_optimize}
    \mathbf{a}^*(t)=\argmin_{\mathbf{a}(t)\in\mathcal{A}} F(\mathbf{a}(t)).
\end{equation}
\end{algorithmic}
\end{algorithm}
\section{Game Projection}\label{sec_projection}
To expand the use of the potential game framework to address games where agents' cost functions may not satisfy \eqref{cost_form}-\eqref{symmetric} or which could be non-potential to begin with, we propose to use the game decomposition technique. This technique was developed based on Helmholtz decomposition   and on the flow representation of a game \cite{game_decomposition}. The main result is included in Lemma \ref{space_decomposition}.
\begin{lemma}[Theorem 4.1 in \cite{game_decomposition}]\label{space_decomposition}
   The space of games $\mathscr{G}_{\mathcal{N},\mathcal{A}}$ is a direct sum of the potential, harmonic, and nonstrategic subspaces, i.e., 
   \begin{equation}
       \mathscr{G}_{\mathcal{N},\mathcal{A}}=\mathscr{P}\oplus\mathscr{H}\oplus\mathscr{N},
   \end{equation}
   where the potential subspace $\mathscr{P}$ is the set of games that satisfy both \eqref{Potential_def} and \eqref{normalized}, the harmonic subspace $\mathscr{H}$ is the set of games that satisfy both \eqref{harmonic} and \eqref{normalized}, and nonstrategic subspace $\mathscr{N}$ is the set of games that satisfy \eqref{nonstrategic} .

   In particular, given a game $\mathcal{G}=\{\mathcal{N},\mathcal{A},\{J_i\}_{i\in\mathcal{N}}\}$, it can be uniquely decomposed into three components:
   \begin{itemize}
       \item Potential component: $J^P\triangleq D^{\dagger}\delta_0\delta_0^{\dagger}DJ$,
        \item Harmonic component: $J^H\triangleq D^{\dagger}(I-\delta_0\delta_0^{\dagger})DJ$,
             \item Nonstrategic component: $J^N\triangleq (I-D^{\dagger}D)J$,
   \end{itemize}
   where $J=\{J_i\}_{i\in\mathcal{N}}$, $J^P+J^H+J^N=J$, $\delta_0^{\dagger}$ is the (Moore-Penrose) pseudoinverse of the operator $\delta_0$ (defined in \eqref{delta_def}) with respect to the inner product in \eqref{inner_1}, and $D^{\dagger}$ is the pseudoinverse of the operator $D$ (defined under \eqref{D_def}) with respect to the inner product in $C_0^N$, which is defined as the sum of the inner products in all $C_0$ components.
\end{lemma}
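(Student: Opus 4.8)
The plan is to treat the decomposition purely as a statement about a complete system of mutually annihilating idempotents on $C_0^N$, and only afterward translate the resulting images into the game-theoretic conditions \eqref{Potential_def}, \eqref{harmonic}, and \eqref{nonstrategic}. The enabling observation is that $\delta_0 f = D(f,\dots,f)$, since $(D(f,\dots,f))(\mathbf{a},\mathbf{b}) = \sum_{i\in\mathcal{N}}W_i(\mathbf{a},\mathbf{b})(f(\mathbf{a})-f(\mathbf{b})) = W(\mathbf{a},\mathbf{b})(f(\mathbf{a})-f(\mathbf{b}))$; hence $\mathrm{im}(\delta_0)\subseteq \mathrm{im}(D)$. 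Writing $\Pi\triangleq\delta_0\delta_0^{\dagger}$ and $Q\triangleq DD^{\dagger}$ for the orthogonal projections onto $\mathrm{im}(\delta_0)$ and $\mathrm{im}(D)$ in $C_1$, this inclusion gives the two identities $Q\Pi=\Pi$ and (by taking adjoints) $\Pi Q=\Pi$, which carry most of the algebraic work.

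First I would introduce the three candidate projections $P_N\triangleq I-D^{\dagger}D$, $P_P\triangleq D^{\dagger}\Pi D$, and $P_H\triangleq D^{\dagger}(I-\Pi)D$, so that $J^N=P_NJ$, $J^P=P_PJ$, and $J^H=P_HJ$. Using the Moore--Penrose identity $D^{\dagger}DD^{\dagger}=D^{\dagger}$, the idempotence of $D^{\dagger}D$, and $Q\Pi=\Pi Q=\Pi$, I would verify by direct substitution that each $P_\bullet$ is idempotent, that $P_iP_j=0$ for $i\neq j$, and that $P_N+P_P+P_H=I$ (the last being immediate, since $P_P+P_H=D^{\dagger}D$). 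For instance $P_NP_P=(I-D^{\dagger}D)D^{\dagger}\Pi D=(D^{\dagger}-D^{\dagger}DD^{\dagger})\Pi D=0$, and $P_PP_H=D^{\dagger}\Pi Q(I-\Pi)D=D^{\dagger}\Pi(I-\Pi)D=0$. A complete system of mutually annihilating idempotents summing to $I$ yields the internal direct sum $C_0^N=\mathrm{im}(P_N)\oplus\mathrm{im}(P_P)\oplus\mathrm{im}(P_H)$ together with the unique decomposition $J=J^N+J^P+J^H$.

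The substantive part is identifying each image with the claimed subspace. For $\mathrm{im}(P_N)=\ker D$ I would note that $DJ^N=0$ means $D_iJ_i^N=0$ for every $i$, i.e. $J_i^N$ is constant along $i$-comparable profiles, which is exactly \eqref{nonstrategic}, so $\mathscr{N}=\ker D$. For $\mathrm{im}(P_P)$ I would observe that $DJ^P=Q\Pi DJ=\Pi DJ=\delta_0(\delta_0^{\dagger}DJ)$ is a gradient flow with potential $F\triangleq\delta_0^{\dagger}DJ$; because the comparable-pair classes $\{\mathcal{C}_i\}_{i\in\mathcal{N}}$ partition $\mathcal{C}$ (each comparable pair differs in exactly one player's strategy), the flow on $\mathcal{C}_i$ is contributed solely by player $i$, so restricting $\sum_iD_iJ_i^P=\sum_iD_iF$ to $\mathcal{C}_i$ forces the per-player identity $D_iJ_i^P=D_iF$, i.e. the potential condition \eqref{potential_def_alter}. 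For $\mathrm{im}(P_H)$ I would use $DJ^H=Q(I-\Pi)DJ=(I-\Pi)DJ$ (since $QDJ=DJ$ and $Q\Pi=\Pi$), which is orthogonal to $\mathrm{im}(\delta_0)$ and hence lies in $\ker\delta_0^{\ast}$; computing the adjoint $\delta_0^{\ast}$ against \eqref{inner_1} identifies $\delta_0^{\ast}(DJ^H)=0$ with the flow-conservation equation \eqref{harmonic}. Finally, membership of $J^P,J^H\in\mathrm{im}(D^{\dagger})=(\ker D)^{\perp}$ in the normalized subspace follows once I show $(\ker D)^{\perp}$ coincides with the games satisfying \eqref{normalized}: unwinding the $h_i$-weighted inner product \eqref{game_inner}, orthogonality to every nonstrategic $J'$ forces $\sum_{\mathbf{a}_i}J_i(\mathbf{a}_i,\mathbf{a}_{-i})=0$. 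Combined with the trivial pairwise intersections of $\mathscr{P},\mathscr{H},\mathscr{N}$ (a normalized nonstrategic game is zero; a normalized game whose flow is both gradient and conservation-satisfying has zero flow, hence is zero), the above inclusions are forced to equalities, yielding $\mathscr{G}_{\mathcal{N},\mathcal{A}}=\mathscr{P}\oplus\mathscr{H}\oplus\mathscr{N}$.

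I expect the main obstacle to be the two ``dictionary'' steps that leave the operator calculus: computing the adjoint $\delta_0^{\ast}$ explicitly and verifying that its kernel reproduces the flow-conservation equation \eqref{harmonic}, and checking that the analytic complement $(\ker D)^{\perp}$ matches the algebraic normalization \eqref{normalized} under the weighted inner product \eqref{game_inner}. The idempotent algebra of the second paragraph is routine once $Q\Pi=\Pi Q=\Pi$ is in hand; it is these translations between the Helmholtz-type projections and the concrete game-theoretic conditions, together with the per-player reconstruction enabled by the partition $\{\mathcal{C}_i\}$, that carry the real content.
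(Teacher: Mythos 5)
The paper offers no proof of this lemma: it is imported verbatim as Theorem~4.1 of \cite{game_decomposition}, so there is no in-paper argument to compare against, and your proposal should be judged against that source. Your reconstruction is correct and follows essentially the same route as the original: a Helmholtz-type orthogonal split of the flow space via $\Pi=\delta_0\delta_0^{\dagger}$ inside $\mathrm{im}(D)$, pulled back to $C_0^N$ through $D^{\dagger}$. The key enabling identity $\delta_0 f=D(f,\dots,f)$ (hence $Q\Pi=\Pi Q=\Pi$), the idempotent algebra for $P_N,P_P,P_H$, and the three dictionary steps ($\ker D\leftrightarrow$ nonstrategic via \eqref{nonstrategic}, $(\ker D)^{\perp}\leftrightarrow$ normalized via \eqref{normalized}, $\ker\delta_0^{\ast}\leftrightarrow$ flow conservation via \eqref{harmonic}) all check out, as does the per-player reconstruction $D_i\hat{J}_i^P=D_iF$ from the disjointness of the classes $\mathcal{C}_i$ on off-diagonal pairs. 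One step to tighten: for three subspaces, trivial \emph{pairwise} intersections do not by themselves force the sum $\mathscr{N}+\mathscr{P}+\mathscr{H}$ to be direct (three distinct lines in the plane have pairwise trivial intersections), so your final sentence needs the marginally stronger facts $\mathscr{P}\cap(\mathscr{N}+\mathscr{H})=\{0\}$ and $\mathscr{H}\cap(\mathscr{N}+\mathscr{P})=\{0\}$; these follow at no extra cost from the computation you already sketch (apply $D$, note $Dn=0$, use that gradient flows and conservative flows are orthogonal in $C_1$ to kill the flow, then invoke normalization to kill the game). A second cosmetic point: the projection $D^{\dagger}$ in the lemma is taken with respect to the unweighted sum inner product on $C_0^N$, not the $h_i$-weighted product \eqref{game_inner} you cite when identifying $(\ker D)^{\perp}$ with the normalized games, though the conclusion is unaffected because the weights are per-player constants.
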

Lemma \ref{space_decomposition} shows that the space of games can be decomposed into three subspaces: $\mathscr{P}$, $\mathscr{H}$, and $\mathscr{N}$, and these three subspaces are orthogonal under the inner product defined in \eqref{game_inner}. Note that the direct sum of $\mathscr{P}$ and $\mathscr{N}$, i.e., $\mathscr{P}\oplus\mathscr{N}$, coincides with the set of all potential games, i.e., the games that satisfy \eqref{Potential_def}, and the direct sum of $\mathscr{H}$ and $\mathscr{N}$, i.e., $\mathscr{H}\oplus\mathscr{N}$, coincides with the set of all harmonic games, i.e., the games that satisfy \eqref{harmonic}. That is:
\begin{equation} \nonumber     \mathscr{G}_{\mathcal{N},\mathcal{A}}=\mathrlap{\underbrace{\phantom{\mathscr{P}\quad\oplus\quad\mathscr{N}}}_{\text{Potential games}}}
      \mathscr{P}\quad\oplus\quad 
      \overbrace{\mathscr{N} \quad\oplus\quad\mathscr{H}}^{\text{Harmonic games}}
\end{equation}

With Lemma \ref{space_decomposition}, given an arbitrary game $\mathcal{G}=\{\mathcal{N},\mathcal{A},\{J_i\}_{i\in\mathcal{N}}\}\in\mathscr{G}_{\mathcal{N},\mathcal{A}}$, one can project it onto the subspace $\mathscr{P}\oplus\mathscr{N}\in\mathscr{G}_{\mathcal{N},\mathcal{A}}$ to find its ``closest" potential game with respect to the inner product defined in \eqref{game_inner} to approximate it. We denote the projected game as $\hat{\mathcal{G}}=\{\mathcal{N},\mathcal{A},\{\hat{J}_i\}_{i\in\mathcal{N}}\}\in\mathscr{G}_{\mathcal{N},\mathcal{A}}$. Such a projection process can be realized  by solving a least-square optimization problem:
\begin{equation}\label{least_square}
\begin{split}
    &\min_{F,{\hat{J}_i}}\sum_{i\in{\mathcal{N}}}h_i\|J_i-\hat{J}_i\|^2\\
    &s.t. \quad D_i\hat{J}_i=D_iF, \quad \forall i\in\mathcal{N}.
    \end{split}
\end{equation}
where $F$ is the potential function of the potentialized game,  $D_i$ is the difference operator defined in \eqref{operator_difference}, and the constraint in \eqref{least_square} is to enforce the potential game condition as specified in \eqref{potential_def_alter}. The detailed algorithm for projecting an arbitrary game onto the potential game space is described in Algorithm \ref{A3}.
\begin{algorithm}[t]
\caption{Potentialize a game} \label{A3}
\hspace*{0.0in} {\bf Input:} \\ 
\hspace*{0.2in}The game to potentialize   $\mathcal{G}=\{\mathcal{N},\mathcal{A},\{J_i\}_{i\in\mathcal{N}}\}$;  \\
\hspace*{0.02in} {\bf Output:} \\
\hspace*{0.2in}Potentialized game: $\hat{\mathcal{G}}=\{\mathcal{N},\mathcal{A},\{\hat{J}_i\}_{i\in\mathcal{N}}\}$ with the potential function $F$.\\
\hspace*{0.02in} {\bf Procedures:} 
\begin{algorithmic}[1]
\STATE  Find $\hat{J}_i(\mathbf{a})$ and $F(\mathbf{a})$  $\forall\mathbf{a}\in\mathcal{A}$ and $\forall i\in\mathcal{N}$ by solving the following least-squares problem:
\begin{equation}\label{least_square_al}
\begin{split}
    &\qquad\min_{F(\mathbf{a}),{\hat{J}_i}(\mathbf{a})}\sum_{i\in{\mathcal{N}}}h_i\|J_i(\mathbf{a})-\hat{J}_i(\mathbf{a})\|^2\\
    &s.t. \quad (D_iF)(\mathbf{a},\mathbf{b})=(D_i\hat{J}_i)(\mathbf{a},\mathbf{b}), \quad \forall i\in\mathcal{N}, \forall \mathbf{a},\mathbf{b}\in \mathcal{A}.
    \end{split}
\end{equation}
\end{algorithmic}
\end{algorithm}

The above game projection approach provides a possible solution to approximating any  game, which may or may not be a potential game, using its closest potential game. If the given game is a potential game, then the projected game is the same as the original. If the original game is not a potential game, then the resulting game differs from the original in terms of agents' costs, i.e., $\exists i\in\mathcal{N}$, such that $J_i\neq \hat{J}_i$. Such a cost function deviation may lead to the solution deviation, i.e., the PSNE in the potentialized game may not be a PSNE in the original game. The robustness analysis in Section \ref{Sec_problemformulation} assesses the impact of cost function deviation on the solution deviation.

\section{Robustness Analysis}\label{sec_robustness}
This section considers the robustness margin of a game if perturbations exist in agents' cost functions. These perturbations may be caused by the game projection as described in Section \ref{sec_projection}, or by the ego vehicle's incomplete information of other agents' cost functions as specified in Challenge 5 in Section \ref{Sec_problemformulation}. We do not distinguish the sources of the perturbations here, but are only interested in the effect of the perturbations on the optimality of the solution. This part of the study is inspired by \cite{robustness}, but we use  different robustness margin definitions and analysis here to meet our specific application needs. 

Consider two games $\mathcal{G}=\{\mathcal{N},\mathcal{A},\{J_i\}_{i\in\mathcal{N}}\}\in\mathscr{G}_{\mathcal{N},\mathcal{A}}$ and $\hat{\mathcal{G}}=\{\mathcal{N},\mathcal{A},\{\hat{J}_i\}_{i\in\mathcal{N}}\}\in\mathscr{G}_{\mathcal{N},\mathcal{A}}$. Let $\hat{\mathcal{G}}$ represent the game solved by the ego vehicle. Then the game $\mathcal{G}$ can represent the original game before the game projection and/or the game associated with other agents' actual cost functions.  Define $\Delta J_i:\mathcal{A}\rightarrow\mathbb{R}$ as
\begin{equation}\label{delta_J}
    \Delta J_i=J_i-\hat{J}_i.
\end{equation}
\begin{defn} [Robustness Margin] \label{def_robustnessmargin} 
Consider the games $\mathcal{G}$ and $\hat{\mathcal{G}}$. Let $\Delta J_i$ be given by \eqref{delta_J}. The robustness margin of a PSNE $\hat{\mathbf{a}}^*$ solved from the game $\hat{\mathcal{G}}$ (denoted as $\mu(\hat{\mathbf{a}}^*)$) is defined as the supremum of the magnitude $\|\Delta J\|_{\infty}$ such that $\hat{\mathbf{a}}^*$ is also a PSNE of the game $\mathcal{G}$, where 
\begin{equation}\label{delta_J_norm}
    \|\Delta J\|_{\infty}=\max_{i\in\mathcal{N}}\|\Delta J_i\|_{\infty},
\end{equation}
\begin{equation}\label{Delta_j_i_norm}
    \|\Delta J_i\|_{\infty}=\max_{\mathbf{a}\in\mathcal{A}}|\Delta J_i(\mathbf{a})|.
\end{equation} 
\end{defn}

\begin{thm}\label{lemma_robustness margin}
Consider the games $\mathcal{G}$ and $\hat{\mathcal{G}}$ and the robustness margin defined in Definition \ref{def_robustnessmargin}. Then the robustness margin $\mu(\hat{\mathbf{a}}^*)$ is 
\begin{equation}\label{robustness_margin}
    \mu(\hat{\mathbf{a}}^*)= \frac{1}{2}\min_{i\in\mathcal{N}}\chi_i(\hat{\mathbf{a}}^*),
\end{equation}
where $\chi_i:\mathcal{A}\rightarrow \mathbb{R}$ is given by
\begin{equation}\label{chi}
\chi_i(\mathbf{a})=\min_{(\mathbf{a},\mathbf{b})\in\mathcal{C}_i, \mathbf{a}\neq \mathbf{b}}\left(\hat{J}_i(\mathbf{b})-\hat{J}_i(\mathbf{a})\right).
\end{equation}
\end{thm}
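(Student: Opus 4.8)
The plan is to translate the statement ``$\hat{\mathbf{a}}^*$ is a PSNE of $\mathcal{G}$'' directly into inequalities and then read off the worst-case perturbation. First I would use Definition \ref{d2} to record that $\hat{\mathbf{a}}^*$ is a PSNE of $\mathcal{G}$ if and only if, for every agent $i$ and every profile $\mathbf{b}$ that is $i$-comparable to $\hat{\mathbf{a}}^*$ (i.e. $\mathbf{b}=(\mathbf{b}_i,\hat{\mathbf{a}}_{-i}^*)$ with $\mathbf{b}_i\neq\hat{\mathbf{a}}_i^*$), one has $J_i(\hat{\mathbf{a}}^*)\le J_i(\mathbf{b})$. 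Substituting $J_i=\hat J_i+\Delta J_i$ from \eqref{delta_J} and rearranging, this is equivalent to
\begin{equation}\nonumber
\bigl(\hat J_i(\mathbf{b})-\hat J_i(\hat{\mathbf{a}}^*)\bigr)\;\ge\;\Delta J_i(\hat{\mathbf{a}}^*)-\Delta J_i(\mathbf{b}),\qquad\forall i\in\mathcal{N},\ \forall (\hat{\mathbf{a}}^*,\mathbf{b})\in\mathcal{C}_i.
\end{equation}
The left-hand side is precisely the quantity minimized in \eqref{chi}, so its smallest value over all $i$-comparable $\mathbf{b}$ equals $\chi_i(\hat{\mathbf{a}}^*)$, which is nonnegative because $\hat{\mathbf{a}}^*$ is already a PSNE of $\hat{\mathcal{G}}$.

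For the \textbf{sufficiency} direction (the margin is at least $\tfrac12\min_i\chi_i$), I would bound the right-hand side by the triangle inequality, $\Delta J_i(\hat{\mathbf{a}}^*)-\Delta J_i(\mathbf{b})\le 2\|\Delta J_i\|_\infty\le 2\|\Delta J\|_\infty$, using \eqref{delta_J_norm}--\eqref{Delta_j_i_norm}. Hence whenever $\|\Delta J\|_\infty\le\tfrac12\min_{i}\chi_i(\hat{\mathbf{a}}^*)$, the displayed inequality holds for every $i$, every $i$-comparable $\mathbf{b}$, and \emph{every} admissible perturbation of that size, so $\hat{\mathbf{a}}^*$ remains a PSNE of $\mathcal{G}$; thus every magnitude up to and including $\tfrac12\min_i\chi_i$ is tolerated.

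For \textbf{tightness} (the supremum is no larger), I would exhibit a perturbation that destroys the equilibrium as soon as the bound is exceeded. Let $i^\star$ attain $\min_i\chi_i(\hat{\mathbf{a}}^*)$ and let $\mathbf{b}^\star$ be the $i^\star$-comparable profile attaining $\chi_{i^\star}$ in \eqref{chi}. For any $\mu'>\tfrac12\chi_{i^\star}$, set $\Delta J_{i^\star}(\hat{\mathbf{a}}^*)=\mu'$, $\Delta J_{i^\star}(\mathbf{b}^\star)=-\mu'$, and all remaining values of $\Delta J$ to zero, so that $\|\Delta J\|_\infty=\mu'$. A direct substitution gives $J_{i^\star}(\hat{\mathbf{a}}^*)-J_{i^\star}(\mathbf{b}^\star)=2\mu'-\chi_{i^\star}>0$, so $\hat{\mathbf{a}}^*$ is not a best response for agent $i^\star$ and hence not a PSNE of $\mathcal{G}$. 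Combining the two directions yields that the supremum of tolerated magnitudes is exactly $\tfrac12\min_i\chi_i(\hat{\mathbf{a}}^*)$, which is \eqref{robustness_margin}.

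The step I expect to require the most care is fixing the quantifier in Definition \ref{def_robustnessmargin}: the margin must be read as the largest magnitude for which \emph{all} perturbations of that size preserve the PSNE (a worst-case, ``for all'' reading), since under an ``there exists'' reading one could inflate $\Delta J$ without bound while still keeping $\hat{\mathbf{a}}^*$ optimal, rendering the margin infinite. Once this worst-case reading is fixed, the only remaining subtlety is the factor $\tfrac12$, which arises because a single agent's perturbation can simultaneously raise the cost at $\hat{\mathbf{a}}^*$ and lower it at the competing profile $\mathbf{b}^\star$, each by up to $\|\Delta J\|_\infty$; checking that the extremal construction respects the $\infty$-norm budget is what makes the bound tight.
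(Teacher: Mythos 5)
Your proposal is correct and follows essentially the same two-step argument as the paper: the sufficiency direction uses the same bound $\Delta J_i(\hat{\mathbf{a}}^*)-\Delta J_i(\mathbf{b})\le 2\|\Delta J_i\|_\infty\le \chi_i(\hat{\mathbf{a}}^*)$, and the tightness direction uses the same extremal perturbation concentrated on the minimizing agent and the two profiles $\hat{\mathbf{a}}^*$, $\mathbf{b}^\star$ (the paper writes it with $\mu'=\tfrac12\chi_{i^\star}+\epsilon$). Your explicit remark about the worst-case reading of the quantifier in Definition \ref{def_robustnessmargin} is a sensible clarification but does not change the substance.
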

\begin{proof}
    To show that \eqref{robustness_margin} holds, in the first step, we show that 
\begin{equation}\label{robustness_margin_min}
    \mu(\hat{\mathbf{a}}^*)\geq \frac{1}{2}\min_{i\in\mathcal{N}}\chi_i(\hat{\mathbf{a}}^*).
\end{equation}
In the second step, we show that
\begin{equation}\label{robustness_margin_max}
    \mu(\hat{\mathbf{a}}^*)\leq \frac{1}{2}\min_{i\in\mathcal{N}}\chi_i(\hat{\mathbf{a}}^*).
\end{equation}

\textbf{\textit{Step 1: To prove \eqref{robustness_margin_min}}.}

For every $\{\Delta J_i\}_{i\in\mathcal{N}}$ that satisfies $\|\Delta J\|_{\infty}\leq \frac{1}{2}\min_{i\in\mathcal{N}}\chi_i(\hat{\mathbf{a}}^*)$, it must satisfy
\begin{equation}\label{35_1030}
    \|\Delta J_i\|_{\infty}\leq \frac{1}{2}\chi_i(\hat{\mathbf{a}}^*), \quad \forall i\in\mathcal{N}.
\end{equation}
Eq. \eqref{35_1030} leads to
    \begin{equation}
    \begin{split}
      \Delta J_i(\hat{\mathbf{a}}^*)-  &\Delta J_i(\mathbf{b})\leq 2\|\Delta J_i\|_{\infty}\leq \hat{J}_i(\mathbf{b})-\hat{J}_i(\hat{\mathbf{a}}^*),\\
      & \forall\mathbf{b}\neq \hat{\mathbf{a}}^* \text{ and } (\hat{\mathbf{a}}^*,\mathbf{b})\in\mathcal{C}_i. 
    \end{split}
    \end{equation}
Therefore, the following inequality holds:  
\begin{equation}\label{32}
        \Delta J_i(\hat{\mathbf{a}}^*)+\hat{J}_i(\hat{\mathbf{a}}^*)\leq \Delta J_i(\mathbf{b})+\hat{J}_i(\mathbf{b}).
\end{equation}
Combining \eqref{delta_J} and \eqref{32}, we have
\begin{equation}\label{33}
    J_i(\hat{\mathbf{a}}^*) \leq J_i(\mathbf{b}).
\end{equation}
Because \eqref{33} holds $\forall i\in\mathcal{N}$ and $\forall \mathbf{b}\neq \hat{\mathbf{a}}^*$, $(\hat{\mathbf{a}}^*,\mathbf{b})\in\mathcal{C}_i$, according to Definition \ref{d2}, $\hat{\mathbf{a}}^*$ is a PSNE for the game $\mathcal{G}$, completing the proof of \eqref{robustness_margin_min}.

\textbf{\textit{Step 2: To prove \eqref{robustness_margin_max}}.}

Let us construct a $\{\Delta J_i\}_{i\in\mathcal{N}}$  with $\|\Delta J\|_{\infty}=\frac{1}{2}\min_{i\in\mathcal{N}}\chi_i(\hat{\mathbf{a}}^*)+\epsilon$, where $\epsilon>0$ is an arbitrary small number. We show that with the constructed $\{\Delta J_i\}_{i\in\mathcal{N}}$,  the strategy profile $\hat{\mathbf{a}}^*$ is not a PSNE of the game $\mathcal{G}$, and thus, the inequality \eqref{robustness_margin_max} holds.

Consider the game $\hat{\mathcal{G}}$. Let $j\in\mathcal{N}$ be the agent that achieves the maximum on the right-hand side of \eqref{delta_J_norm}. Let $\mathbf{b}\in\mathcal{A}$ be the startegy profile that achieves the minimum on the right-hand side of \eqref{chi} with $\mathbf{a}$ being $\hat{\mathbf{a}}^*$. Construct a $\{\Delta J_i\}_{i\in\mathcal{N}}$ such that 
\begin{equation}\label{38}
    \Delta J_j(\hat{\mathbf{a}}^*)= \frac{1}{2}\left(\hat{J}_j(\mathbf{b})-\hat{J}_j(\hat{\mathbf{a}}^*)\right)+\epsilon,
\end{equation}
\begin{equation}\label{39}
    \Delta J_j(\mathbf{b})=-\Delta J_j(\hat{\mathbf{a}}^*).
\end{equation}
\begin{equation}
    \Delta J_i(\mathbf{a})=0, \forall i\neq j, \text{ and } \forall \mathbf{a}\notin \{\hat{\mathbf{a}}^*,\mathbf{b}\}.
\end{equation}
Equations \eqref{38} and \eqref{39} lead to
\begin{equation}
\begin{split}
     \Delta J_j(\hat{\mathbf{a}}^*)-\Delta J_j(\mathbf{b})&=2\Delta J_j(\hat{\mathbf{a}}^*)\\
     &=\left(\hat{J}_j(\mathbf{b})-\hat{J}_j(\hat{\mathbf{a}}^*)\right)+2\epsilon.
     \end{split}
\end{equation}
Therefore, we have
\begin{equation}\label{41}
   \Delta J_j(\hat{\mathbf{a}}^*)+\hat{J}_j(\hat{\mathbf{a}}^*)=\Delta J_j(\mathbf{b})+\hat{J}_j(\mathbf{b})+2\epsilon. 
\end{equation}
Combining \eqref{delta_J} and \eqref{41}, we have 
\begin{equation}\label{42}
J_j(\hat{\mathbf{a}}^*)=J_j(\mathbf{b})+2\epsilon.
\end{equation}
Thus, 
\begin{equation}\label{42}
J_j(\mathbf{b})<J_j(\hat{\mathbf{a}}^*).
\end{equation}
Note that $(\hat{\mathbf{a}}^*,\mathbf{b})\in\mathcal{C}_j$. Therefore, $\hat{\mathbf{a}}^*$ is not a PSNE of the game $\mathcal{G}$, completing the proof of \eqref{robustness_margin_max}.

The inequalities \eqref{robustness_margin_min} and \eqref{robustness_margin_max} together  lead to \eqref{robustness_margin}, completing the proof of Theorem \ref{lemma_robustness margin}. 
\end{proof}

With the robustness margin in Theorem \ref{lemma_robustness margin}, after projecting a game onto the potential game space using Algorithm \ref{A3} and deriving the PSNE of the projected game using Algorithm \ref{A2}, one may check whether $\|\Delta J\|_{\infty}\leq \frac{1}{2}\min_{i\in\mathcal{N}}\chi_i(\hat{\mathbf{a}}^*)$ holds. If yes, then this NE is also a NE of the original game, despite that the original game may not be a potential game. We shall provide a specific numerical example in the next section.

On the aspect of incomplete information, Theorem \ref{lemma_robustness margin} quantifies how "trustworthy" the game-theoretic solution is in autonomous driving, when facing various interactive agents. Specifically, although it is inevitable that the assumed other agents' cost functions deviate from their actual ones, such a deviation may not necessarily lead to a solution deviation. A game with a large robustness margin suggests that it can accommodate a wide range of agents' costs, e.g., agents of a wide range of aggressiveness, thereby indicating a trustworthy solution against the variability of other agents' driving styles/preferences.

Theorem \ref{lemma_robustness margin} also reveals that the robustness margin of a PSNE $\hat{\mathbf{a}}^*$ of a game $\hat{\mathcal{G}}$ depends on $\min_{(\hat{\mathbf{a}}^*,\mathbf{b})\in\mathcal{C}_i, \hat{\mathbf{a}}^*\neq \mathbf{b}}\left(\hat{J}_i(\mathbf{b})-\hat{J}_i(\hat{\mathbf{a}}^*)\right)$. Intuitively, it indicates that given a specific cost function design (e.g., \eqref{cost_1}) and the range of agent strategy  (e.g., the acceleration variation range), the sparser the strategy space, the more robust the PSNE of the game is. That is, for example, if a vehicle's strategy is its acceleration, then the game with the strategy space $[-3, -1, 1, 3]$  $m/s^2$ would result in a larger robustness margin than that with $[-3, -2, -1, 0, 1, 2, 3]$  $ m/s^2$ as strategy space. This observation indicates that game-theoretic approaches might be more reliable (in the sense of a larger robustness margin) in predicting other agents' high-level intents (e.g., whether a vehicle would yield or proceed) compared to predicting their accurate maneuvers (e.g., whether a vehicle brakes at $-2$  $m/s^2$ or $-3$  $m/s^2$). This observation is consistent with human driving experience. When a human driver interacts with others, e.g., to merge onto a highway with dense traffic, based on the common driving experience, they often  ``predict" whether a vehicle is going to allow them to merge/proceed in front of it, as opposed to trying to accurately estimate such a vehicle acceleration. 

In the case where the maximum magnitude of the cost function deviation exceeds the robustness margin in \eqref{robustness_margin}, the PSNE of $\mathcal{G}$ may not necessarily be a PSNE of $\hat{\mathcal{G}}$ any more, but it can still be an $\epsilon$-NE in $\hat{\mathcal{G}}$ (see Definition \ref{epsilonNash_de} for $\epsilon$-NE). The value of $\epsilon$ depends on the distance of the two games and is given by the following lemma. 

\begin{lemma}[Theorem 6.3 in \cite{game_decomposition}]\label{accuracy}
    Let $\mathcal{G}\in\mathscr{G}_{\mathcal{N},\mathcal{A}}$ and $\hat{\mathcal{G}}\in\mathscr{G}_{\mathcal{N},\mathcal{A}}$ be two games. Define $\alpha\triangleq \|\mathcal{G}-\hat{\mathcal{G}}\|_{\mathcal{N},\mathcal{A}}$. Then every PSNE of $\hat{\mathcal{G}}$ is an $\epsilon$-NE of $\mathcal{G}$ for some $\epsilon\leq \max_{i\in\mathcal{N}}\frac{2\alpha}{\sqrt{h_i}}$.
\end{lemma}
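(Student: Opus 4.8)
The plan is to unwind both the norm $\alpha=\|\mathcal{G}-\hat{\mathcal{G}}\|_{\mathcal{N},\mathcal{A}}$ and the $\epsilon$-NE condition into explicit statements about the cost deviations $\Delta J_i = J_i-\hat{J}_i$, and then glue them together using the PSNE property of the given equilibrium $\hat{\mathbf{a}}^*$ of $\hat{\mathcal{G}}$. First I would expand the squared norm using \eqref{game_inner} and \eqref{inner_1}:
\begin{equation}\nonumber
\alpha^2=\|\mathcal{G}-\hat{\mathcal{G}}\|^2_{\mathcal{N},\mathcal{A}}=\sum_{i\in\mathcal{N}}h_i\sum_{\mathbf{a}\in\mathcal{A}}\left(\Delta J_i(\mathbf{a})\right)^2.
\end{equation}
Since every summand is nonnegative, dropping all but the $i$-th outer term gives the per-agent bound $\sum_{\mathbf{a}\in\mathcal{A}}(\Delta J_i(\mathbf{a}))^2\leq \alpha^2/h_i$ for each $i\in\mathcal{N}$.

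The crucial observation is that a single evaluation is dominated by this $\ell^2$ sum: for any fixed profile $\mathbf{a}$, $|\Delta J_i(\mathbf{a})|\leq \sqrt{\sum_{\mathbf{b}\in\mathcal{A}}(\Delta J_i(\mathbf{b}))^2}\leq \alpha/\sqrt{h_i}$. With this pointwise bound in hand, I would fix an agent $i$ and an arbitrary deviation $\mathbf{a}'_i\in\mathcal{A}_i$, and decompose the cost gap in $\mathcal{G}$ via $J_i=\hat{J}_i+\Delta J_i$:
\begin{equation}\nonumber
J_i(\hat{\mathbf{a}}^*)-J_i(\mathbf{a}'_i,\hat{\mathbf{a}}^*_{-i})=\big[\hat{J}_i(\hat{\mathbf{a}}^*)-\hat{J}_i(\mathbf{a}'_i,\hat{\mathbf{a}}^*_{-i})\big]+\big[\Delta J_i(\hat{\mathbf{a}}^*)-\Delta J_i(\mathbf{a}'_i,\hat{\mathbf{a}}^*_{-i})\big].
\end{equation}
Because $\hat{\mathbf{a}}^*$ is a PSNE of $\hat{\mathcal{G}}$, Definition \ref{d2} forces the first bracket to be $\leq 0$, so the gap is at most $\Delta J_i(\hat{\mathbf{a}}^*)-\Delta J_i(\mathbf{a}'_i,\hat{\mathbf{a}}^*_{-i})$, which the pointwise bound together with the triangle inequality bounds by $2\alpha/\sqrt{h_i}$. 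Taking the worst case over agents then yields $J_i(\hat{\mathbf{a}}^*)\leq J_i(\mathbf{a}'_i,\hat{\mathbf{a}}^*_{-i})+\max_{i\in\mathcal{N}}2\alpha/\sqrt{h_i}$ for all $\mathbf{a}'_i$ and all $i$, which is precisely the $\epsilon$-NE inequality \eqref{epsilonNash_de} for $\mathcal{G}$ with $\epsilon=\max_{i\in\mathcal{N}}2\alpha/\sqrt{h_i}$.

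There is no serious obstacle: the argument is a definition chase bridged by one elementary inequality. The only place demanding care is the passage from the global norm $\alpha$ to a bound on a single profile value, where one must notice that the weight $h_i$ sits inside the inner product \eqref{game_inner}, so the correct per-agent constant is $\alpha/\sqrt{h_i}$ rather than $\alpha$. I would also remark that the result is stated as an inequality ($\epsilon\leq\max_{i\in\mathcal{N}}2\alpha/\sqrt{h_i}$) rather than an equality because the triangle-inequality step is generally slack: the two evaluations $\Delta J_i(\hat{\mathbf{a}}^*)$ and $\Delta J_i(\mathbf{a}'_i,\hat{\mathbf{a}}^*_{-i})$ need not simultaneously attain $\pm\alpha/\sqrt{h_i}$ with opposite signs, and in fact the entire $\ell^2$ budget of $\Delta J_i$ cannot concentrate on a single $i$-comparable pair.
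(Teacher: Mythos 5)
Your proof is correct; the paper itself offers no proof of this lemma, importing it verbatim as Theorem 6.3 of \cite{game_decomposition}, and your argument---expanding $\alpha^2$ via \eqref{game_inner}--\eqref{game_norm} into $\sum_{i\in\mathcal{N}}h_i\sum_{\mathbf{a}\in\mathcal{A}}(\Delta J_i(\mathbf{a}))^2$, extracting the per-agent pointwise bound $|\Delta J_i(\mathbf{a})|\leq\alpha/\sqrt{h_i}$, and combining it with the PSNE inequality for $\hat{\mathcal{G}}$ via the triangle inequality---is exactly the standard argument by which the cited reference establishes it. Your closing remark on slack is also sound: since the two evaluations $\Delta J_i(\hat{\mathbf{a}}^*)$ and $\Delta J_i(\mathbf{a}'_i,\hat{\mathbf{a}}^*_{-i})$ occur at distinct profiles, applying Cauchy--Schwarz to that pair within the $\ell^2$ budget would even sharpen the constant to $\sqrt{2}\,\alpha/\sqrt{h_i}$, which is consistent with the lemma stating only an upper bound on $\epsilon$.
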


Lemma \ref{accuracy} indicates that although the cost function deviation may lead to the solution deviation, we can  quantify the optimality of the obtained solution in the original game. 

\section{Numerical Studies}\label{sec:simulation}
This section reports the outcome from numerical studies to illustrate and to verify the algorithms and theoretical analysis in Sections \ref{Sec_problemformulation}-\ref{sec_robustness}.

\subsection{Scenario setup}
We consider a highway lane-changing scenario as pictured in Figure \ref{scenario}. The vehicles' dynamics are described by a kinematic bicycle model \cite{bicycle_2,bicycle_validation}:
\begin{equation}\label{dynamics}
\begin{split}
    x_i(t+1)&=x_i(t)+   v_{i}(t)\text{cos}(\phi_i(t)+\beta_i(t))\Delta t,\\
    y_i(t+1)&=y_i(t)+ v_{i}(t)\text{sin}(\phi_i(t)+\beta_i(t))\Delta t,\\
    v_i(t+1)&=v_i(t)+ w_{i}(t)\Delta t,\\
    \phi_{i}(t+1)&=\phi_{i}(t)+ \frac{v_i(t)}{l_r}\text{sin}(\beta_i(t))\Delta t,\\
    \beta_i(t+1)&=\beta_i(t)+\text{tan}^{-1}\left(\frac{l_r}{l_r+l_f}\text{tan}(\delta_{i,f}(t))\right)\Delta t,
    \end{split}
\end{equation}
where $i=1,2,3$ designates the $i^{th}$ vehicle; $\Delta t=0.5s$ is the sampling time; $x_i$ and $y_i$ are the longitudinal and lateral position of the center of mass of  vehicle $i$ along $x$ and $y$ axes, respectively; $v_{i}$ and  $w_{i}$  are  the velocity and acceleration of the center of mass of  vehicle $i$, respectively; $\beta_i$ is  the angle of the velocity with respect to the longitudinal axis of the vehicle; $\phi_i$ is the inertial heading; $l_f$ and $l_r$ are the lengths from the center of mass to the front and rear ends of the car, respectively, and are selected to be $l_f = l_r = 1.5$ $m$; $\delta_{i,f}$ is the steering angle of the front wheels, and the rear wheel steering angles are assumed to be zero since for many vehicles the
rear wheels cannot be steered \cite{bicycle_2}. The inputs of each vehicle are the acceleration $w_i(t)$ and the steering angle $\delta_{i,f}(t)$.

\begin{figure}[thpb]
\centering
\includegraphics[width=0.4\textwidth]{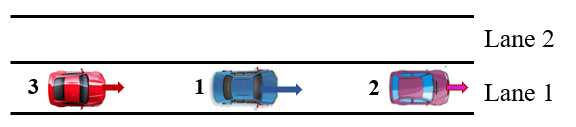}
\caption{Highway lane-changing scenario.}\label{scenario}
\end{figure}

In the pictured scenario in Figure \ref{scenario}, vehicles aim to select an appropriate lane to drive in to keep their desired speeds. In this case, the action space of each vehicle is set to be  \begin{equation}\label{finite_space}  \mathcal{U}_i=\{\text{Lane 1, Lane 2}\}.
\end{equation}
After a lane is selected, a vehicle needs to determine whether it needs to perform a lane-changing, i.e., whether the selected lane (called target lane) differs from its current lane. If the lane-changing is needed, then a constant steering angle  $\delta_{i,f}(t)=\pm0.9^{\circ}$ is applied until the vehicle is in the center of the target lane, after which a lane centering controller is activated to reduce the relative heading angle between the vehicle and the road to zero. The longitudinal acceleration $w_i(t)$ of each vehicle remains zero during the process. Initially, all vehicles are placed in Lane 1, with their relative positions shown in Figure \ref{scenario}. 

Since the lane-changing maneuver often takes around $4s$ to complete \cite{add_T}, the prediction horizon length $T$ is selected to correspond to $4s$. The strategy space $\mathcal{A}_i$, $i=1,2,3$, is constructed based on the action space $\mathcal{U}_i$ such that a constant action, e.g., Lane 1, is planned over the prediction horizon, i.e.,  $a_i(\tau)=a_i(t)\in\mathcal{U}_i$, $\forall \tau\in[t,t+T-1]$ in \eqref{optimization}. Note that although a constant action is planned, a vehicle may change its mind after $\Delta t$, i.e., a vehicle may terminate an attempt of lane-changing and change back to its previous lane if needed. This is triggered by the receding-horizon game setting in \eqref{optimization}.  
\subsection{Cost function setup}
The  cumulative cost of each vehicle is {designed} as
\begin{equation}\label{cost_1}
\begin{split}
    &J_i(\mathbf{a}_i(t),\mathbf{a}_{-i}(t))\\
    &=J_i^{self}(\mathbf{a}_i(t))+\sum_{j\in \mathcal{N},j\neq i}J_{ij}(\mathbf{a}_i(t),\mathbf{a}_j(t))\\
    &\quad +\sum_{j\in\mathcal{N}_{-i}^*}\mathbf{1}_{\text{rules}}(\mathbf{a}_i(t),\mathbf{a}_j(t))
\end{split}
\end{equation}
where the first term $J_i^{self}(\mathbf{a}_i(t))$ is to discourage lane changes, and a result, to make lane-changing less frequent. Specifically,  $J_i^{self}(\mathbf{a}_i(t))$ is designed as 
\begin{align}
&J_i^{self}(\mathbf{a}_i(t))=\begin{cases} 1, &\mbox{if $\mathbf{a}_i(t)\neq \mathbf{a}_i^*(t-1)$},\vspace{1ex}\\
0, &\mbox{otherwise}.\vspace{1ex}\end{cases}
\end{align}
Note that $\mathbf{a}_i^*(t-1)$ represents vehicle $i$'s selected lane at $t-1$, and $\mathbf{a}_i(t)\neq \mathbf{a}_i^*(t-1)$ indicates that a lane-changing is required to perform $\mathbf{a}_i(t)$.

The second term $\sum_{j\in \mathcal{N},j\neq i}J_{ij}(\mathbf{a}_i(t),\mathbf{a}_j(t))$ is to avoid  collision:
\begin{equation}\label{collision_design}
\begin{split}
    &\sum_{j\in \mathcal{N},j\neq i}J_{ij}(\mathbf{a}_i(t),\mathbf{a}_j(t))\\
    &=\sum_{\tau=t}^{t+T-1}C_{ij}(x_i(\tau),y_i(\tau),x_j(\tau),y_j(\tau)),
    \end{split}
\end{equation}
where
{\begin{equation}\label{tanh_cost}
\begin{split}
    &C_{ij}(x_i(\tau),y_i(\tau),x_j(\tau),y_j(\tau))\\
    &=\left(\tanh{\left(\beta(d^2_{x,c}-\left(x_i(\tau)-x_j(\tau)\right)^2)\right)}+1\right)\\
    &\quad\cdot\left(\tanh{(\beta(d^2_{y,c}-\left(y_i(\tau)-y_j(\tau)\right)^2))}+1\right).
    \end{split}
\end{equation}
Here $d_{x,c}$ and $d_{y,c}$ are the longitudinal and lateral collision distances, respectively, and we select $d_{x,c}=7$ $m$ and $d_{y,c}=4.5$ $m$ in the simulation. The parameter $\beta$ is selected to be sufficiently large such that the $tanh$ function always takes the two extreme values $-1$ or $1$. The cost \eqref{tanh_cost} means that at time $\tau$, if vehicle $j$ is a collision threat to vehicle $i$ both laterally and longitudinally, then $J_{ij}(x_i(\tau),y_i(\tau),x_j(\tau),y_j(\tau))=4$; Otherwise, $J_{ij}(x_i(\tau),y_i(\tau),x_j(\tau),y_j(\tau))=0$. Note that the lane width of our simulation environment is set to be $5$ $m$, and {the lateral collision distance $d_{y,c}$ is selected to be a bit smaller than the road width such that if a vehicle is not in the same lane or in the target lane  of the ego vehicle, then it is not considered as a threat to the ego vehicle.

The third term $\sum_{j\in\mathcal{N}_{-i}^*}\mathbf{1}_{\text{rules}}(\mathbf{a}_i(t),\mathbf{a}_j(t))$ is to encourage the compliance with right-of-way/road priority rules, which, in our simulation, are that the rear vehicle (resp. the lane-changing vehicle) has the responsibility to take actions to avoid rear-end accidents (resp. accidents during the lane-changing). 
Here $\mathcal{N}_{-i}^*$ is the set of agents with higher road priority than agent $i$. It includes the vehicle right in front of $i$ and the ones in vehicle $i$'s target lane if $\mathbf{a}_i(t)$ requires lane-changing. The notation  $\mathbf{1}_{\text{condition}}$ is defined as
\begin{align}\label{1}
&\mathbf{1}_{\text{condition}}=\begin{cases} 1, &\mbox{if condition is true},\vspace{1ex}\\
0, &\mbox{otherwise}.\vspace{1ex}\end{cases}
\end{align}

Note that with the last term in the cost function (i.e., $\sum_{j\in\mathcal{N}_{-i}^*}\mathbf{1}_{\text{rules}}(\mathbf{a}_i(t),\mathbf{a}_j(t))$), the agents' interactions are not symmetric anymore, i.e., $J_i$ does not satisfy the condition required in Theorem \ref{t1}. We shall show in the next subsection that this formulated game is not a potential game.
\subsection{Game solution, projection, and robustness}
Let us consider the formulated game: 
\begin{equation}\label{game_simulation}
    \begin{split}
    \mathbf{a}^*_i(t)&\in \argmin_{\mathbf{a}_i(t)\in\mathcal{A}_i}J_i(\mathbf{a}_i(t),\mathbf{a}_{-i}(t)),
    \end{split}
\end{equation}
where $i\in\{1,2,3\}$ and $J_i$ is given by \eqref{cost_1}. Consider the scenario where vehicles $2$ and $3$ drive at the same speed, faster than vehicle $1$. Initially (i.e., $t=0$ $s$), all vehicles are in Lane 1 as shown in Figure \ref{scenario} with safe inter-vehicle distances, i.e., $J_{ij}(\mathbf{a}_i(0),\mathbf{a}_j(0))=0,\forall i,j\in\{1,2,3\}, i\neq j$, and $\forall \mathbf{a}_i(0), \mathbf{a}_j(0) \in\mathcal{A}$. Because vehicle $3$ drives faster than vehicle $1$, $\Delta 
 x_{1,3}(t)=|x_1(t)-x_3(t)|$ decreases with time. Consider the smallest $t_{c}>0$ such that, if no lane-changing, then $J_{13}(\mathbf{a}_1(t_{c}),\mathbf{a}_3(t_{c}))\neq 0$. This $t_{c}$ corresponds to the earliest time such that if no action is taken, then a collision would happen at the end of the prediction horizon, $t_{c}+T-1$, i.e., $\Delta 
 x_{1,3}(t_{c}+T-1)<d_{x,c}$. In our simulation, $t_c$ approximately corresponds to $4s$, as shown in Figures \ref{PSNE_4} and \ref{PSNE_6}. Let us analyze the game \eqref{game_simulation} at this key time point. We denote this game as $\mathcal{G}$.

 Since every vehicle has two possible strategies: Lane 1 or Lane 2, the total number of strategy profiles is $2^3=8$. For ease of representation, we label these $8$ strategy profiles using the numbers between $1$ and $8$ as shown in Table \ref{Table I}. In a strategy profile, the $k^{th}$ element represents the $k^{th}$ vehicle's strategy. For example, the strategy profile $(2,2,1)$ (which is labeled as ``$7$" in Table \ref{Table I}) represents that vehicles $1$ and $2$ select Lane 2, and vehicle $3$ selects Lane 1.

\begin{table}[!h]
\centering
\caption{Strategy profile labels}\label{Table I}
\begin{tabular}{c|c|c|c}
\hline  
 Strategy profile & Label & Strategy profile & Label
\\
\hline
    $(1,1,1)$ & $\mathbf{1}$ &  $(2,1,2)$ & $\mathbf{5}$\\
\hline
    $(1,2,1)$ & $\mathbf{2}$ &  $(2,1,1)$ & $\mathbf{6}$\\
\hline
    $(1,2,2)$ & $\mathbf{3}$ &  $(2,2,1)$ & $\mathbf{7}$\\
\hline
    $(1,1,2)$ & $\mathbf{4}$ &  $(2,2,2)$ & $\mathbf{8}$\\
\hline
\end{tabular}
\end{table}
\begin{figure*}[thpb]
\centering
\subfigure[]{\label{original}
\includegraphics[width=0.27\textwidth]{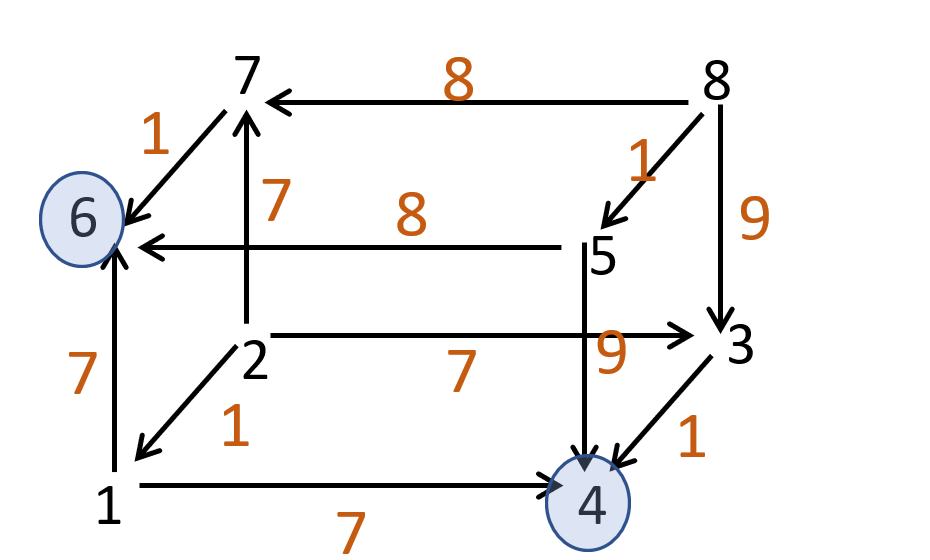}}
\subfigure[]{\label{projected}
\includegraphics[width=0.27\textwidth]{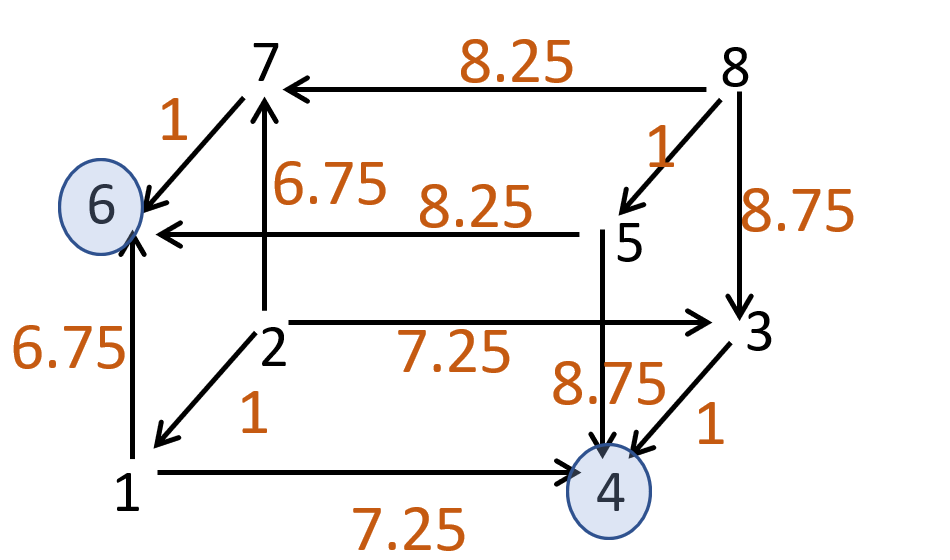}}
\subfigure[]{\label{potentialfunction}
\includegraphics[width=0.27\textwidth]{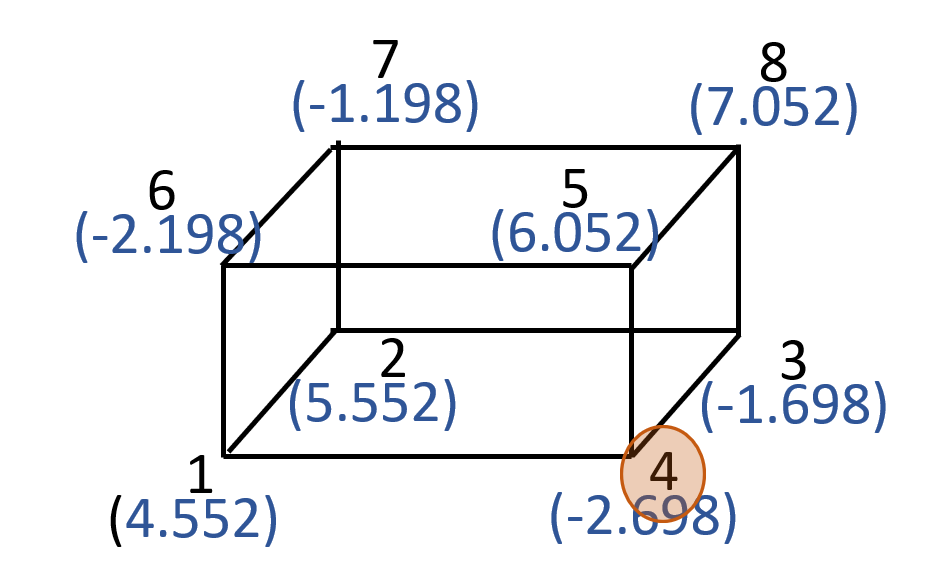}}
{\caption{(a) The flow representation of the game $\mathcal{G}$; (b) The flow representation of the potentialized game $\hat{\mathcal{G}}$; and (c) The potential function value of $\hat{\mathcal{G}}$ at each strategy profile.}}\label{lanechanging_1}
\end{figure*}
\begin{figure*}
    \centering
    \includegraphics[width=0.85\textwidth]{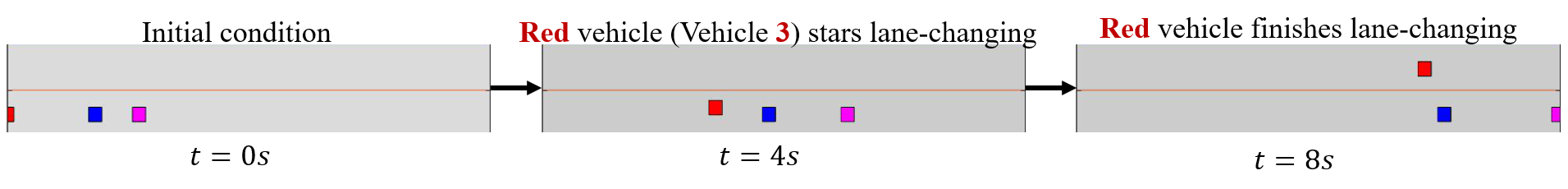}
    \caption{Vehicle lane-changing behaviors at the strategy profile 4. }
    \label{PSNE_4}
\end{figure*}
\begin{figure*}
    \centering
    \includegraphics[width=0.85\textwidth]{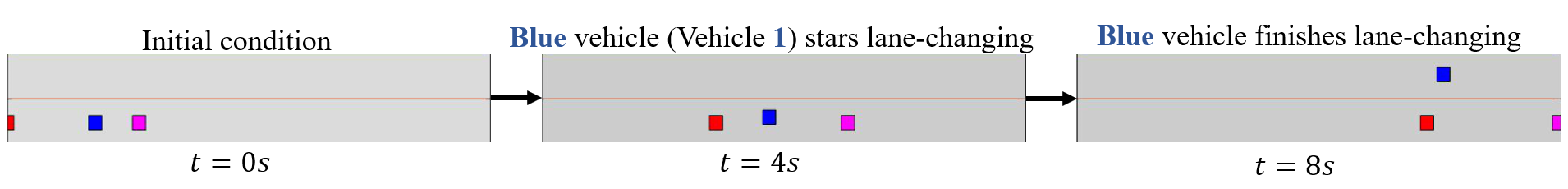}
    \caption{Vehicle lane-changing behaviors at the strategy profile 6.}
    \label{PSNE_6}
\end{figure*}
With these labels, we represent the game $\mathcal{G}$ using its flow representation as shown in Figure \ref{original}. Here the nodes between $1$ and $8$ represent the $8$ strategy profiles. Each edge connects two comparable strategy profiles, and the edge flow is defined by the pairwise comparison function \eqref{X}. For example, the flow from the strategy profile ``1" to ``4" means that vehicle $3$'s cost will be reduced by 1 if changing from the strategy profile ``1" to ``4" (note that strategy profiles ``1" and ``4" differ only in vehicle $3$'s strategy). With such a flow representation, the PSNE of the game is clear: The node that has no flow ``leaving" this node is a PSNE. It is because according to Definition \ref{Nash_de}, a PSNE is a strategy profile that all its comparable strategy profiles are equally or less preferred than it, i.e., no cost reduction is possible by unilateral strategy deviation. With this criterion, we can observe that the game shown in  Figure \ref{original} has two PSNE: The strategy profiles 4 and 6. They are marked with the blue circles.

The strategy profile 4 corresponds to (1,1,2), representing that vehicle $3$ moves to Lane 2 while the other two stay in Lane 1. This NE is found in our simulation when we use the best response algorithm (i.e., Algorithm \ref{A_br}) to solve the game. The vehicles' behavior at three important time points is illustrated in Figure \ref{PSNE_4}. The strategy profile 6 corresponds to  (2,1,1), representing that vehicle $1$ moves to Lane 2 while the other two stay in Lane 1. This equilibrium is also found in our simulation by the best response algorithm, and its corresponding vehicles' behavior can be visualized in Figure \ref{PSNE_6}. Since both strategy profiles 4 and 6 are PSNE, they are both individually optimal solutions and are both able to resolve the potential collision conflict between vehicles 1 and 3 with the minimum lane-changing effort. However, we also note that these two PSNE are not equally preferred: Considering the road priority rule, vehicle $3$ should perform the lane-changing to avoid the rear-end accident, resulting in that strategy profile 4 is preferable to 6 considering the global situation.

Next let us project the game $\mathcal{G}$ in Figure \ref{original} onto the potential game space using Algorithm \ref{A3}. The potentialized game (denoted as $\hat{\mathcal{G}}$) has the flow representation as shown in Figure \ref{projected}. It is clear that the flows in Figure \ref{projected} do not coincide with that in Figure \ref{original}, indicating that the original game is not a potential game. 

For the projected game $\hat{\mathcal{G}}$, it also has two PSNE: Strategy profiles 4 and 6. For each PSNE, according to Theorem \ref{lemma_robustness margin}, its robustness margin can be calculated as 
\begin{equation}
\begin{split}
    \mu(\hat{\mathbf{a}}^*)= &\frac{1}{2}\min_{i\in\mathcal{N}}\min_{(\mathbf{a},\mathbf{b})\in\mathcal{C}_i, \mathbf{a}\neq \mathbf{b}}\left(\hat{J}_i(\mathbf{b})-\hat{J}_i(\mathbf{a})\right)\\
    =&0.5
    \end{split}
    \end{equation}
By calculating $\Delta J_i=J_i-\hat{J}_i$, where  $\hat{J}_i$ is the cost of the potentialized game derived by Algorithm \ref{A3},  we can find that the game projection leads to $\|\Delta J\|_{\infty}=0.125$. Because $\|\Delta J\|_{\infty}<\mu(\hat{\mathbf{a}}^*)$, we know that such a game projection would not lead to solution deviation, i.e., all the PSNE of $\hat{\mathcal{G}}$ remain a PSNE of $\mathcal{G}$. This result is confirmed by observing the PSNE of the games from their flow representations in Figures  \ref{original} and \ref{projected}, respectively.  Therefore, although the original game is not a potential game and that the potentialized game differs from the original game (i.e., cost deviations exist), the sets of solutions of the two games coincide.

The potential function $F$ of the potentialized game $\hat{\mathcal{G}}$ associated with each strategy profile is shown in Figure \ref{potentialfunction}. We can observe from this figure that the PSNE that minimizes the potential function is unique and is the strategy profile 4 (marked with the orange circle). Therefore, strategy profile 4 is not only individually optimal (in the sense of NE) but also socially optimal (in the sense of optimizing the ``team interest" characterized by the potential function), consistent with our analysis above that strategy profile 4 is preferable to 6. Moreover, to solve the potentialized game $\hat{\mathcal{G}}$, one can use the potential function optimization algorithm (i.e., Algorithm \ref{A2}), which requires one optimization as opposed to iterative optimizations in the best response algorithm (i.e., Algorithm \ref{A_br}). The above analysis confirms the merits of the potential/potentialized game. 

We also report the running time of the game projection. The running time is collected from MATLAB$^{\circledR}$ on a laptop with an Intel Core i9-12900H processor clocked at $2.50$ GHz and $32$ GB of RAM. For each game projection, i.e., each execution of Algorithm \ref{A3} to find the potentialized game $\hat{\mathcal{G}}$ and the associated potential function $F$, the average running time is $ 0.05s$,  confirming that it's practical to use game projection for real-time operation. 

\section{Conclusion}\label{sec_conclusion}
This paper developed a more practical and robust game-theoretic framework for autonomous driving. Challenges caused by game complexity and incomplete information were addressed. A receding-horizon multi-player game was formulated to characterize agents interactions for AV decision-making. Recognizing the difficulty in solving non-potential games, we employed game decomposition techniques to effectively project arbitrary games onto potential game space. This projection, while potentially leading to cost function deviations, does not necessarily result in solution deviation. We investigated this aspect by defining and deriving robustness margins, which represent the maximum allowable cost function deviations without impacting the optimality of a game solution. Our findings, supported by numerical studies in a lane-changing traffic scenario, demonstrated that asymmetric agents interactions and non-potential games can be effectively handled using our framework within a reasonable computational time. This research broadens the scope of potential game-based approaches in autonomous driving and establishes a robustness analysis framework that quantifies solution stability in multi-player games against incomplete information in cost functions. 



\bibliography{references}
\bibliographystyle{IEEEtran}

\begin{IEEEbiography}[{\includegraphics[width=1in,height=1.25in,clip,keepaspectratio]{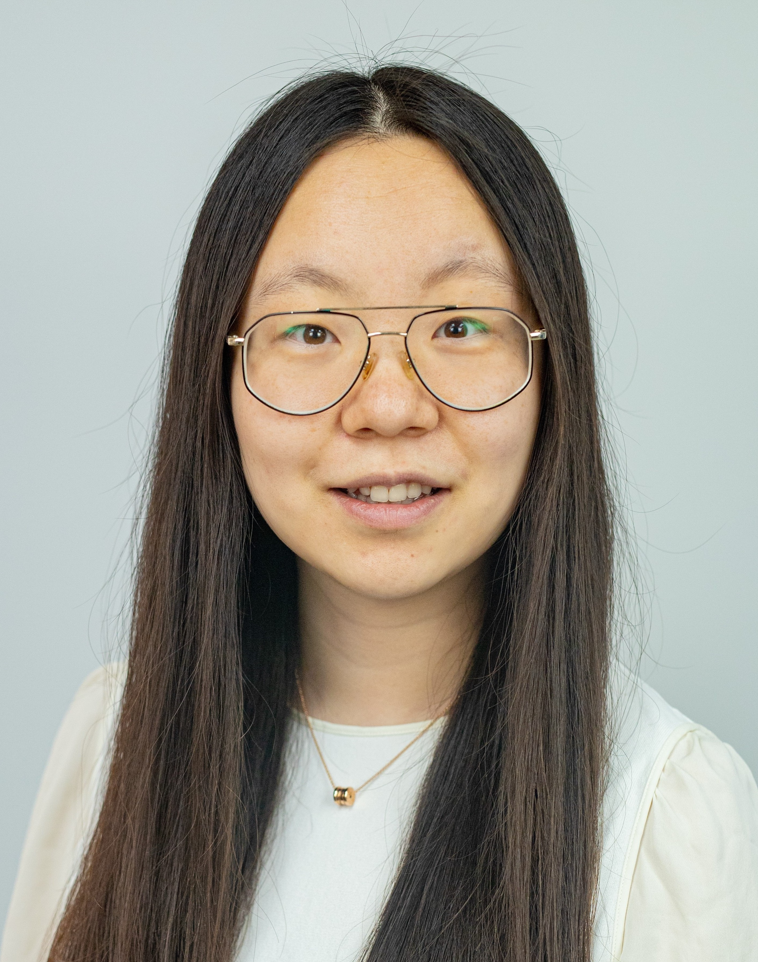}}]{Mushuang Liu} is currently an Assistant Professor in the Department of Mechanical and Aerospace Engineering at the University of Missouri, Columbia, MO. She worked as a postdoc in the Department of Aerospace Engineering at the University of Michigan, Ann Arbor, MI, 2021-2022. She received her Ph.D degree from the University of Texas at Arlington in 2020 and her B.S. degree from the University of Electronic Science and Technology of China in 2016. Her research interests include intelligent control and decision-making for multi-agent systems using techniques from control theory, game theory, and machine learning. The applications of her research include automotive systems, aerospace systems, and robotics.
\end{IEEEbiography}

\begin{IEEEbiography}[{\includegraphics[width=1in,height=1.25in,clip,keepaspectratio]{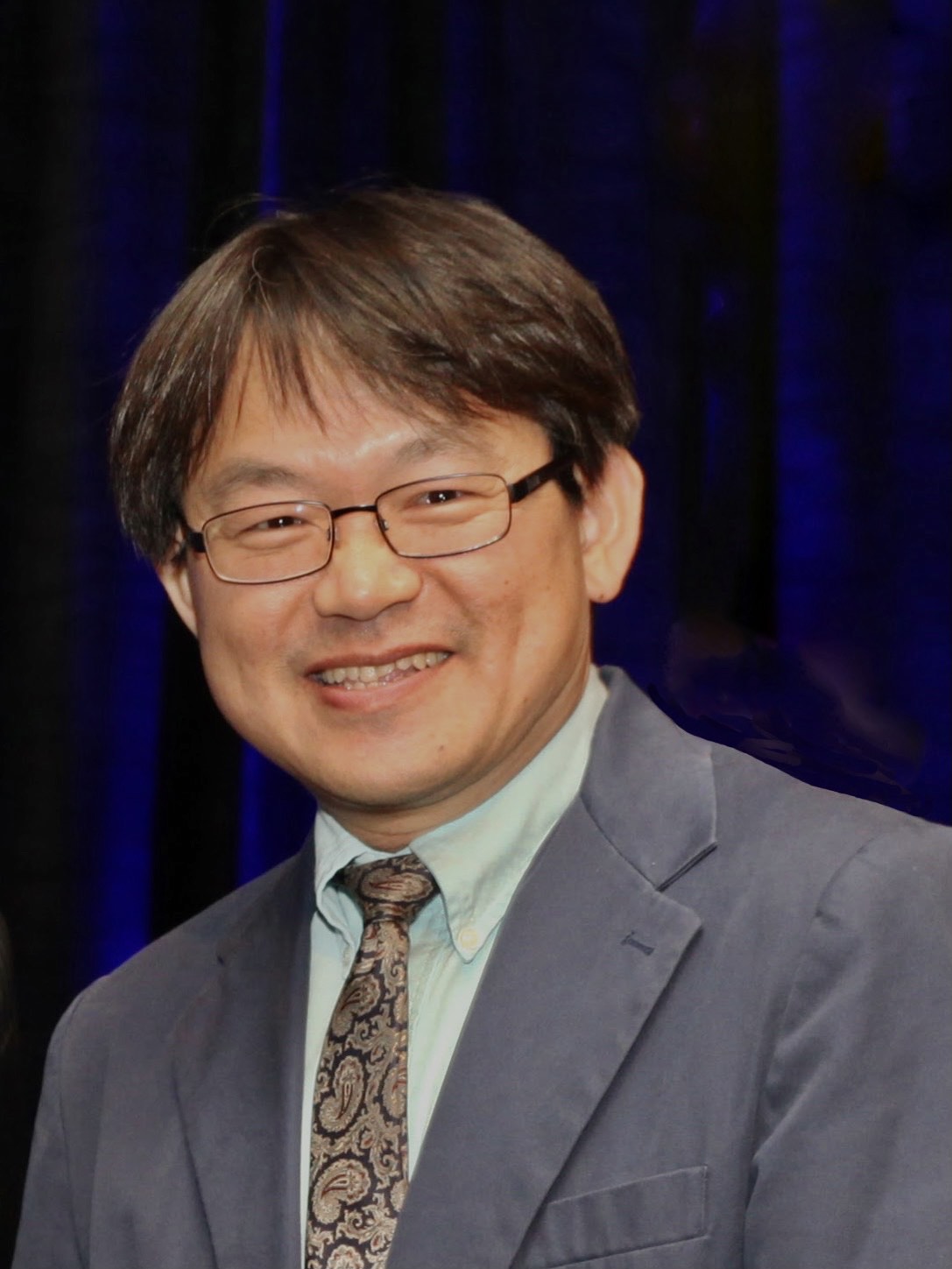}}]{H. Eric Tseng}  received the B.S. degree from the National Taiwan University, Taipei, Taiwan, in 1986, and the M.S. and Ph.D. degrees in mechanical engineering from the University of California at Berkeley, Berkeley, in 1991 and 1994, respectively. In 1994, he joined Ford Motor Company. At Ford, he is currently a Senior Technical Leader of Controls and Automated Systems in Research and Advanced Engineering. Many of his contributed technologies led to production vehicles implementation. His technical achievements have been recognized internally seven times with Ford’s highest technical award—the Henry Ford Technology Award, as well as externally by the American Automatic Control Council with Control Engineering Practice Award in 2013. He has over 100 U.S. patents
and over 120 publications. He is an NAE Member.
\end{IEEEbiography}

\begin{IEEEbiography}[{\includegraphics[width=1in,height=1.5in,clip,keepaspectratio]{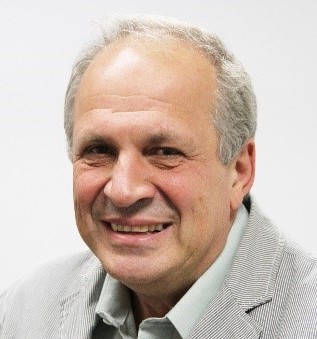}}]{Dimitar Filev} (Fellow, IEEE) is Senior Henry Ford Technical Fellow in Control and AI with Research $\&$ Advanced Engineering – Ford Motor Company. His research is in computational intelligence, AI and intelligent control, and their applications to autonomous driving, vehicle systems, and automotive engineering.  He holds over 100 granted US patents and has been awarded with the IEEE SMCS 2008 Norbert Wiener Award and the 2015 Computational Intelligence Pioneer’s Award. Dr. Filev is a Fellow of the IEEE and a member of the National Academy of Engineering. He was President of the IEEE Systems, Man, and Cybernetics Society (2016-2017).
\end{IEEEbiography}

\begin{IEEEbiography}[{\includegraphics[width=1in,height=1.5in,clip,keepaspectratio]{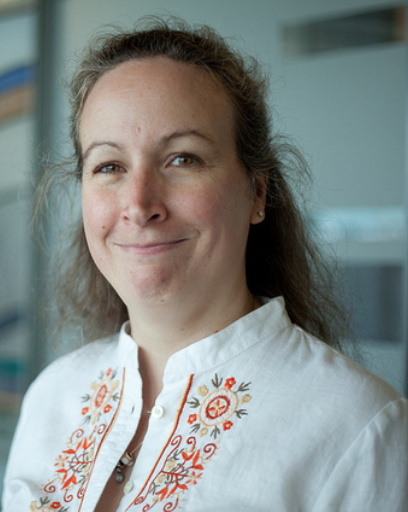}}]{Anouck Girard} received
the Ph.D. degree in ocean engineering from the
University of California at Berkeley, Berkeley, CA, USA, in 2002. She has been with the University of Michigan, Ann Arbor, MI, USA, since 2006, where she is currently a Professor of aerospace engineering. She has coauthored the book Fundamentals of Aerospace Navigation and Guidance (Cambridge University Press, 2014). Her current research interests include vehicle dynamics and control systems. She was a recipient of the Silver Shaft Teaching Award from the University of Michigan and the Best Student Paper Award from the American Society of Mechanical Engineers.
\end{IEEEbiography}

\begin{IEEEbiography}[{\includegraphics[width=1in,height=1.25in,clip,keepaspectratio]{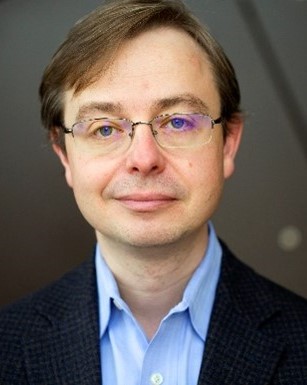}}]{Ilya Kolmanovsky} (Fellow, IEEE) received the Ph.D. degree in aerospace engineering from the University of Michigan, Ann Arbor, MI, USA, in 1995. Prior to joining the University of Michigan as a Faculty Member in 2010, he was with Ford Research and Advanced Engineering, Dearborn, MI, for close to 15 years. He is currently a Pierre T. Kabamba Collegiate Professor in the Department of Aerospace Engineering at the University of Michigan. His research interests include control theory for systems with state and control constraints, and control applications to aerospace and automotive systems. He is a Fellow of IFAC and NAI and a Senior Editor of IEEE TRANSACTIONS ON CONTROL SYSTEMS TECHNOLOGY.
\end{IEEEbiography}
\end{document}